\documentclass[12pt]{extarticle}
\pdfoutput=1

\usepackage[ngerman,english]{babel}
\usepackage{graphicx}
\usepackage{framed}
\usepackage[normalem]{ulem}
\usepackage{xfrac}
\usepackage{enumerate}
\usepackage{comment}
\usepackage[utf8]{inputenc}
\usepackage[top=1 in,bottom=1in, left=1 in, right=1 in]{geometry}

\usepackage{amsmath}
\usepackage{amsfonts}
\usepackage{amsthm}
\usepackage{amssymb}
\usepackage{mathtools}
\usepackage{hyperref}
\usepackage[utf8]{inputenc} 
\usepackage[T1]{fontenc}    
\usepackage{enumitem}
\usepackage{cleveref}

\usepackage[colorinlistoftodos]{todonotes}

\theoremstyle{definition}
\newtheorem{definition}{Definition}

\theoremstyle{plain}
\newtheorem{theorem}[definition]{Theorem}

\newtheorem{lemma}[definition]{Lemma}
\newtheorem{proposition}[definition]{Proposition}

\theoremstyle{remark}
\newtheorem{remark}[definition]{Remark}

\usepackage[autostyle]{csquotes}
\usepackage{biblatex}
\usepackage{authblk}

\title{Complexity Theory meets Ordinary Differential Equations}

\author[1]{Adalbert Fono}
\author[1]{Noah Wedlich}
\author[2,3,4,5]{Holger Boche}
\author[1,6,7,8]{Gitta Kutyniok}
\affil[1]{Department of Mathematics, Ludwig-Maximilians-Universität München, Germany}
\affil[2]{Institute of Theoretical Information Technology, TUM School of Computation, Information and Technology, Technical University of Munich, Germany}
\affil[3]{Munich Quantum Valley (MQV)}
\affil[4]{Munich Center for Quantum Science and Technology (MCQST)}
\affil[5]{Cluster of Excellence “Centre for Tactile Internet with Human-in-the-Loop” (CeTI) of Technische Universität Dresden, Germany }

\affil[6]{Department of Physics and Technology, University of Tromsø, Norway}
\affil[7]{Munich Center for Machine Learning (MCML), Munich, Germany}
\affil[8]{DRL-German Aerospace Center, Germany}
\date{} 

\begin{document}

\maketitle

\begin{abstract}
This contribution investigates the computational complexity of simulating linear ordinary differential equations (ODEs) on digital computers. We provide an exact characterization of the complexity blowup for a class of ODEs of arbitrary order based on their algebraic properties, extending the characterization of first order ODEs shown in \cite{Boche2021ComplexityBlowup}.
Complexity blowup indeed arises in most ODEs (except for certain degenerate cases) and means that there exists a low complexity input signal, which can be generated on a Turing machine in polynomial time, leading to a corresponding high complexity output signal of the system in the sense that the computation time for determining an approximation up to $n$ significant digits grows faster than any polynomial in $n$. Similarly, we derive an analogous blowup criterion for a subclass of first-order systems of linear ODEs. 
Finally, we discuss the implications for the simulation of analog systems governed by ODEs and exemplarily apply our framework to a simple model of neuronal dynamics---the leaky integrate-and-fire neuron---heavily employed in neuroscience.
\end{abstract}

\section{Introduction}

The study of differential equations is one of the most important areas of mathematics, with applications ranging from physics and engineering to biology and Artificial Intelligence. It allows us to model and analyze dynamical systems, describing how quantities change over time or space. The mathematical theory of differential equations, concerned with finding (explicit) solutions, proving their existence and uniqueness under certain conditions, or deducing their qualitative properties, is a well-studied (though not completed) field. 

However, with the advent of modern computational methods, the numerical solution, emulation, and simulation of differential equations have become increasingly important. For instance, the design and development of analog systems, as well as optimization testing, strongly depend on numerical simulations \cite{BauerLecler91, JainYangYoshino91, UesakaKazuyoshiKawamata20, BrunoMesquitaAntonio08, SanzSanchoRoldan07, BrambillaAngeloAmore01, AlbuquerqueSilva05, ChenWeiweiHan14}. Similarly, emulation of such systems as in neuroscience, e.g., the behaviour of neural networks as well as single neurons and synapses, is a key step in advancing knowledge and exploiting the properties of (biological) brains \cite{gerstner_kistler_naud_paninski_2014, Christensen2022NCSurvey, Mehonic2024Neuromorphic}. 
Understanding the feasibility and computational complexity of these tasks is therefore practically relevant. 

One obstacle in this analysis is the fact that analog systems are inherently dependent on continuous (physical) quantities but are almost always simulated on digital computers. Digital processing relies on manipulating a discrete set of values in a step-by-step manner, i.e., digital computers employ rational numbers to approximate the exact continuous quantities. Although discretizing the inputs and parameters of an analog system to obtain a rational description is typically not a severe restriction, the true output of the rational analog system is nevertheless often irrational. For instance, even for (linear) ordinary differential equations with rational coefficients, the corresponding solution may take irrational values when evaluated at rational time points. Hence, the output of the (digital) simulation and the analog system operate in different domains. 

This dichotomy is an often-neglected aspect in the simulation of analog systems on digital hardware and the subsequent mathematical analysis assessing the correctness and accuracy of the simulation. However, there are certain restrictions under which analog systems can actually be simulated on digital machines with error control. Here, not only the aspired simulation is carried out, but additionally, the correctness of the simulation up to an arbitrary but fixed error is guaranteed. The described property illustrates the problem of the \emph{computability} of a task. In fact, this condition cannot be satisfied, among others, for the solution of the wave equation \cite{PourElNingZhong06}, the Fourier transform \cite{PourElRichards83, BocheMoenich20}, and spectral factorization \cite{BochePohl19}. This means that for a specific problem, such as the wave equation with rational coefficients, there exist computable initial conditions such that the corresponding solution does not take computable values for certain rational inputs. 

Going beyond the computability of a task, a similarly important question is the one of \emph{complexity}, i.e., assuming that a system is computable in the above sense, what is the computational complexity for achieving a simulation with a given maximum approximation error? Complexity is understood here as the necessary number of computational steps to achieve the desired accuracy (which, by the computability assumption, is feasible in principle). Of particular interest is the occurrence of a \emph{complexity blowup} describing the situation where a low-complexity input to the simulation of an analog system entails a high-complexity computation to obtain the associated output. 

More formally, assume that we want to simulate a given system $S: u\mapsto y$ mapping an input signal $u$ onto an output signal $y$ on a digital computer. For a specific input $u_s$ and associated output $y_s = S(u_s)$, the number of computational steps to generate an approximation up to an error of $2^{-n}$ of $u_s$ and $y_s$ represents the complexity/computation time of the computation of the input and output, respectively. Classically, in computer science, low complexity refers to tasks where the computation time grows at most polynomially in the relevant parameter, with the understanding that polynomial-time problems are feasible to solve in practice, whereas super-polynomial problems are typically infeasible. Hence, computational complexity is crucial when assessing the practical feasibility of the implementation of a specific problem. Here, we will consider systems $S$ with a fixed set of parameters/coefficients, each polynomial-time computable, so that a potential complexity blowup traces back to the inherent complexity of $S$ and not to other external factors. 

\subsection{Contribution}
In this work, we show that any non-trivial linear ordinary differential equation (ODE) of the form  
\begin{equation*}
    \sum_{i=0}^{m} a_i y^{(i)} = \sum_{j=0}^{n} b_j u^{(j)}, \quad m,n \in \mathbb{N}, a_i,b_j\in\mathbb{C}, u \in \mathcal{C}^n([0,1], \mathbb{C}), y\in \mathcal{C}^m([0,1], \mathbb{C}),
\end{equation*}
exhibits complexity blowup on digital machines modeled by Turing machines. Here, $y^{(i)}$ and $u^{(j)}$ denote the $i$-th and $j$-th derivative of the univariate functions $y$ and $u$, respectively. Note that we implicitly assume for simplicity that $y(t)$ is defined for all $t\in [0,1]$, i.e., the ODE does not grow arbitrarily large at some $\tau$ satisfying $0 < \tau < 1$. However, we believe that this assumption is not necessary and can be removed by refining our proof strategy. Indeed, for linear ODEs with constant coefficients and smooth input signal $u$---a key scenario for our considerations---, the solution $y$ is a continuous function that is smooth on $(0,1]$ \cite{Doetschbook}. 
Finally, the ODE is non-trivial if $y$ cannot be written as $c_n u^{(n)}+\ldots+c_0 u$ for some $c_0, \dots, c_n\in\mathbb{C}$. 

In this setting, we establish a link between the algebraic structure of ODEs and their computational complexity measured in terms of accuracy. In particular, we show that there exists a smooth input signal $u$ and coefficients $a_i, b_j$, $i=1,\dots,m$, $j=1,\dots,n$, which can be computed efficiently, i.e., in polynomial time, such that the corresponding solution of the ODE is $\#P$-complete. Under the assumption $P\neq \mathit{NP}$, it follows that the solution is not efficiently, i.e., in polynomial time, computable. Furthermore, we show that this is an upper bound on the complexity of the solution: if the input signal can be computed in polynomial time, then the solution is at most in $\#P$. 
This extends the analysis of the first-order case ($m=n=1$) in \cite{Boche2021ComplexityBlowup} to higher-order equations ($m, n \in \mathbb{N}$). In both settings, the proof strategies build on foundational work on the computational complexity of real functions and operators \cite{kawamura_computational_nodate, KerIFriedman82}.  The key ingredient is Friedman's seminal result about complexity of integration that shows the existence of a smooth polynomial-time computable functions on real intervals whose integral/accumulation functions are $\#P$-complete \cite{FRIEDMAN198480}.

Beyond theoretical interest, our result forms a foundation on which further discussions about the complexity of emulating analog (electrical) circuits can be based. In particular, non-trivial ODEs basically comprise all relevant applications in electrical engineering and physics based on linear systems such as any linear time-invariant (LTI) system of arbitrary order, including, for instance, simple LR low-pass and RC low-pass electrical circuits \cite{Boche2021ComplexityBlowup}. Hence, the simulation of all these dynamics on digital hardware is prone to a complexity blowup.   

Moreover, the most basic description of neuronal dynamics is based on electrical circuits governed by linear ODEs systems such as the (leaky) integrate and fire (LIF) model \cite{gerstner_kistler_naud_paninski_2014, Gerstner_Kistler_2002}. To account for these potentially more complex behaviours, we also begin to extend our observations to systems of linear ODEs, i.e., equations of the form
\begin{equation*}
    \sum_{i=0}^{m} A_i\cdot y^{(i)} = \sum_{j=0}^{n} B_j\cdot u^{(j)}, \quad A_i,B_j\in\mathbb{C}^{d\times d}, u\in\mathcal{C}^n([0,1], \mathbb{C}^d), y\in \mathcal{C}^m([0,1], \mathbb{C}^d) 
\end{equation*}
with $m,n,d \in \mathbb{N}$. Due to the different underlying algebraic structures of the domain (compared to single ODEs), we cannot directly adapt the previous techniques to derive an analogous result. However, for systems with order $m=1$ and under further technical assumptions, which are, for instance, met by the LIF model, we can still prove the existence of complexity blowup in the above sense. Obtaining a full description of the complexity blowup phenomenon for general systems, while its occurrence is still expected, is left for future work. Similarly, it is an interesting question whether and to what degree complexity blowup arises in more general dynamics such as linear time-invariant differential-algebraic equations. Here, the increased intrinsic complexity of the dynamics suggest that complexity blowup may still naturally occur although certain properties of linear ODEs do not hold any more \cite{BrenanAlgDif, PetzoldDifAlg, ReissigBocheBarton02}.

\subsection{Related Work}
There is a rich body of literature on the computability and complexity of differential equations; a good introduction and summary are provided in \cite{brattka_computability_2021}. We briefly review the most relevant results for our work here.

With regard to ODEs, it has been shown that the solutions of computable ODEs, i.e. equations of the form $y' = f(t,y)$ with computable initial condition $y(0)$ and sufficiently regular (i.e. Lipschitz) righthand side $f$, are computable \cite{collins_effective_2008}. Thus, we will implicitly assume the computability of the solution throughout this work on linear ODEs. When considering complexity, it was shown that solutions can always be computed in PSPACE \cite[Theorem 3.3.2.]{brattka_computability_2021}, i.e., solved by a Turing machine using a polynomial amount (in the input size) of space or equivalently polynomial many tapes. 
Furthermore, there exist polynomial-time computable ODEs (i.e., instead of mere computability, one requires polynomial-time computability of the initial condition and righthand side) whose solution is PSPACE-complete \cite[Theorem 3.3.3.]{brattka_computability_2021}. This indicates that, in general, the solution of ODEs can be highly complex. In contrast, we show that by restricting to linear ODEs, the complexity of the solution is reduced to $\#P$-completeness.

Regarding PDEs, several works have analyzed the complexity of specific equations. These include the wave equation \cite{PourElNingZhong06}, the inhomogeneous Schrödinger equation \cite{WeihrauchZhong01}, and Navier-Stokes equations \cite{SunZhongZiegler}. It has also been shown that the solutions of certain linear PDEs can be PSPACE-computable \cite{SteinbergZiegler16}, again showing that by generalizing our setting (now from ordinary to partial differential equations but maintaining the linearity), the complexity of the solution can increase significantly.

\subsection{Outline}

We begin by introducing the necessary concepts in complexity theory and linear ODEs in Section \ref{sec:Prel}. Subsequently, in Section \ref{sec:main}, we state our results, starting with determining the exact complexity of first-order linear ODEs. Next, we extend this to higher orders by induction on the order, exploiting the algebraic structure of linear ODEs. Finally, we state a similar result for a subclass of first-order systems of linear ODEs. We conclude by discussing the practical relevance of these results, highlighting a potential application in neuronal dynamics, in Section \ref{sec:discussion} followed by the proofs of our results in Section \ref{sec:proofs}.

\subsection{Notation}
We write $\{0,1\}^\ast$ for the set of all finite binary strings over $\{0,1\}$ and denote by $|x|$ the length of a string $x\in\{0,1\}^\ast$. Similarly, $|A|$ denotes the cardinality of a set $A$.

We use $f:\mathcal{X}\rightharpoonup\mathcal{Y}$ to indicate that $f$ is a partial function from $\mathcal{X}$ to $\mathcal{Y}$. Moreover, for $\mathcal{X}\subset\mathbb{R}$ and $\mathcal{Y}\subset\mathbb{C}^d$, $\mathcal{C}^n(\mathcal{X},\mathcal{Y})$ denotes the class of $n$-times continuously differentiable functions from $\mathcal{X}$ to $\mathcal{Y}$. 

For a ring $\mathcal{R}$, $\mathcal{R}[X]$ denotes the ring of polynomials in the indeterminate variable $X$ with coefficients in $\mathcal{R}$. Additionally, for $P(X),Q(X)\in \mathcal{R}[X]$, we write $Q(X)\mid P(X)$ if $P(X)=Q(X)\cdot S(X)$ for some $S(X) \in\mathcal{R}[X]$, i.e., $P(X)$ divides $Q(X)$, and $Q(X)\nmid P(X)$ otherwise. Furthermore, $\mathcal{Z}(P)$ denotes the multiset of roots of $P(X)$.

\section{Preliminaries}\label{sec:Prel}

This section presents the necessary background from complexity theory and the theory of ODEs.

\subsection{Computable Analysis and Complexity Theory}

The Theory of Computability presented here is based on the computational model of the Turing machine \cite{Turing36Entscheidung}. This is an abstract machine that represents a natural and intuitive model of computation \cite{CooperBook, AroraBook}. In fact, by the Church-Turing Thesis \cite{Copeland2020CTT}, a problem can be solved by an intuitively effective method if and only if it can be solved by a Turing machine. Thus, by reasoning about Turing machines, we can obtain results on the intrinsic complexity of a given problem.

\subsubsection{Computable Analysis}

Classical computability theory is based on Turing machines, which operate in an inherently discrete domain \cite{Turing36Entscheidung, CooperBook, AroraBook}. Turing machines operate on strings on a tape over an alphabet such as $\{0,1\}$. Hence, they intrinsically compute partial (word) functions $ \{0,1\}^* \rightharpoonup \{0,1\}^*$, where $\{0,1\}^*$ denotes the set of all finite binary strings. Thus, Turing machines operate with inputs and outputs in $\{0,1\}^*$ . By introducing (partial) surjective functions $\{0,1\}^* \rightharpoonup \mathcal{X}$, one can consider Turing machines operating on any countable set $\mathcal{X}$, e.g., using binary numbers to encode natural numbers. Similarly, Turing machines computing binary functions $\{0,1\}^* \times \{0,1\}^* \rightharpoonup \{0,1\}^*$, i.e., Turing machines with two (or any finite number of) binary input strings, can be reduced to Turing machines operating on $\{0,1\}^* \rightharpoonup \{0,1\}^*$ by, for instance, introducing a separator symbol in the alphabet. Thus, we often assume Turing machines directly operating on $\mathbb{N}$ and/or $\mathbb{Q}$ for (notational) simplicity.

Crucially, a Turing machine cannot directly operate on infinite objects such as real numbers or real functions. However, we can extend classical computability theory to infinite objects by means of finite approximations \cite{Pour-El17Computability, Weihrauch00CompAnal, Ko91ComplTheoryRealFunc}. In particular, we consider Turing machines with a single tape and count the number of necessary computation steps to achieve a prescribed approximation accuracy given as input. Since only one computation step is possible per base time unit, this is the computation time. As already mentioned in the context of PSPACE, the Turing machine could also have multiple parallel tapes, which would reflect the necessary computation space.
\begin{definition}[Computable Real number] \label{def:computable_real_number}
    A real number $x \in \mathbb{R}$ is \emph{computable} if there exists a Turing machine $M:\mathbb{N} \to \mathbb{Q}$ such that 
    \begin{equation}\label{eq:computable_real_number}
               \left| x - M(n) \right| \leq 2^{-n} \quad  \forall n\in\mathbb{N}.
    \end{equation}
    In this case, we say that $M$ converges effectively to $x$, written as $M \xrightarrow{\text{eff}} x$.
\end{definition}
\begin{remark}
     We write $\mathbb{R}_c\subsetneq \mathbb{R}$ for the (dense) set of computable real numbers. A complex number $z = a + bi \in \mathbb{C}$ is computable if both $a,b\in\mathbb{R}_c$ are computable. Similarly, a matrix $A \in \mathbb{C}^{m\times n}$ is computable if all its entries are computable. We write $\mathbb{C}_c$ and $\mathbb{C}_c^{m\times n}$ for the set of computable complex numbers and computable complex matrices, respectively.
\end{remark}
To quantify the complexity of computing real numbers, the number of iterations to reach the approximation rate prescribed in \eqref{eq:computable_real_number} can be considered.
\begin{definition}[Polynomial-Time Computable Numbers]\label{def:polynomial_time_computable_number}
    A real number $x \in \mathbb{R}$ is \emph{polynomial-time computable} if it is computable by a Turing machine $M$ and if there is a polynomial $P(X)\in\mathbb{N}[X]$ such that \eqref{eq:computable_real_number} holds after at most $P(n)$ steps of $M$. 
    
    A complex number or matrix is \emph{polynomial-time computable} if its components are poly\-nomial-time computable. We write $\mathbb{R}_p$, $\mathbb{C}_p$, and $\mathbb{C}_p^{m\times n}$ for the set of polynomial-time computable real numbers, complex numbers, and complex matrices, respectively.
\end{definition}
To define computable functions, we employ Oracle Turing machines \cite{Ko91ComplTheoryRealFunc}; an equivalent characterization can be achieved via Weihrauch's type-two effectivity approach \cite{Weihrauch00CompAnal}. An Oracle Turing machine $M_{\boldsymbol{\cdot}}$ is simply an ordinary Turing machine equipped with the feature to query oracle tapes. Conceptually, one can think of an oracle $\gamma$ as a function from $\mathbb{N}$ to $\mathbb{Q}$, which is able to calculate the function value $\gamma(\cdot)$ and print it on its tape in a single operation. Hereby, an oracle may process a potentially impossible computations by digital means, e.g., an oracle $\gamma$ could represent the effective convergence to a non-computable real number $x \in \mathbb{R}$. Thus, informally speaking, the function value $f(x)$ of a computable real function $f$ might be computed (in a certain sense) by an associated Oracle Turing machine $M_{\boldsymbol{\cdot}}$ equipped with the oracle $\gamma$ denoted as $M_\gamma$. However, this is not feasible when considering only Turing machines representing digital computations. But the crucial point is that the actual machine $M_{\boldsymbol{\cdot}}$ is solely responsible for the computation of the function $f$, and the oracle $\gamma$ is only used to provide approximations to the input value $x$. Consequently, this allows us to separate the computation of the inputs from the computation of the function evaluation. Thus, Oracle Turing machines provide a tool to clearly assess the intrinsic computability and in a second step the complexity of functions.  

\begin{definition}[Computable Function]\label{def:computable_function}
    A function $f:\mathbb{R}\to\mathbb{R}$ is \emph{computable} if there exists an Oracle Turing machine $M_{\boldsymbol{\cdot}}:\mathbb{N}\to\mathbb{Q}$ such that for all $t\in\mathbb{R}$ and all oracles $\gamma:\mathbb{N}\to\mathbb{Q}$ with $\gamma\xrightarrow{\text{eff}}t$  we have 
    \begin{equation}\label{eq:computable_function}
         \left| f(t) - M_\gamma(n) \right| \leq 2^{-n} \quad \forall n\in\mathbb{N}.
    \end{equation}
    In this case, we write $M_{\boldsymbol{\cdot}} \xrightarrow{\text{eff}} f$.
\end{definition}

Analogously to polynomial-time computable numbers, we can define polynomial-time computable functions. However, there is a subtle but important issue not yet discussed. Simply assuming that an oracle effectively converges to a given input is not sufficient since the convergence can be slow which potentially artificially introduces a higher complexity. For instance, an oracle could produce a rational approximation $\tfrac{p}{q}$ of some $x \in [0,1]$ with accuracy $2^{-n}$, where $p,q$ have both binary/decimal length $2^{n}$. Then, the associated Turing machine would already require exponential time to just read the result of the oracle call. To resolve this issue one would need to measure (polynomial) complexity both relative to the accuracy of the output and the size of the oracle encoding the input. For our needs, it suffices to employ dyadic rational numbers instead of all rational numbers as a basis for computable real numbers, i.e., to substitute $M(n)$ by $M(n)/2^n$ in \eqref{eq:computable_real_number} and require that $M:\mathbb{N}\to \mathbb{Z}$ instead of $M:\mathbb{N}\to \mathbb{Q}$ \cite{KawamuraCookComplx2013}. This leads to the following definition. 
\begin{definition}[Polynomial-Time Computable Function] \label{def:polynomial_time_computable_function}
    A function $f:[0,1]\to\mathbb{R}$ is \emph{polynomial-time computable} if there exists an Oracle Turing machine $M_{\boldsymbol{\cdot}}:\mathbb{N}\to\mathbb{Q}$ such that for all $t\in[0,1]$ and all oracles $\gamma:\mathbb{N}\to\mathbb{Z}$ with $\tfrac{\gamma}{2^n}\xrightarrow{\text{eff}}t$  we have 
    \begin{equation}\label{eq:computable_function_2}
         \left| f(t) - M_{\gamma/2^n}(n) \right| \leq 2^{-n} \quad \forall n\in\mathbb{N}
    \end{equation}
    and there exists a polynomial $P(X)\in\mathbb{N}[X]$ such that \eqref{eq:computable_function_2} holds after at most $P(n)$ steps of $M_{\boldsymbol{\cdot}}$. 
\end{definition}

\begin{remark}
    A complex-valued function $f:\mathbb{R}\to\mathbb{C}$ is \emph{computable} if both its imaginary and its real parts are computable, i.e., there exists an Oracle Turing machine $M_{\boldsymbol{\cdot}}:\mathbb{N}\to\mathbb{Q}[i]:=\{p + iq \mid p,q\in\mathbb{Q}\}$ such that \eqref{eq:computable_function} holds. Analogously, $f$ is polynomial-time computable if both its imaginary and real parts are polynomial-time computable. 
    
    Moreover, a function $f:[0,1]\to\mathbb{C}^d$ with $d\in\mathbb{N}$ is called computable (resp. polynomial-time computable) if all its components $f_i:[0,1]\to\mathbb{C}$ for $1\leq i\leq d$ are computable (resp. polynomial-time computable). We write $\mathrm{Comp}([0,1], \mathbb{C}^d)$ and $\mathrm{Pol}([0,1], \mathbb{C}^d)$ for computable and polynomial-time computable functions $f:[0,1]\to\mathbb{C}^d$, respectively. If $d=1$, we write $\mathrm{Comp}([0,1])$ and $\mathrm{Pol}([0,1])$. This and the following definitions and theorems can be straightforwardly extended to functions on any compact interval $[a,b]$, however, for simplicity, we only consider functions and ODEs on the interval $[0,1]$. 
\end{remark}
\begin{remark}
     The class of polynomial-time computable functions is denoted by $\mathit{FP}$.
\end{remark}

Equipped with the notion of (polynomial-time) computable functions, we can now formally introduce the concept of a complexity blowup. 
\begin{definition}[Complexity Blowup]\label{def:complexity_blowup}
    Consider $\mathcal{F}:=\{f \mid f:[0,1]\to\mathbb{C}^d\}$ and an operator $T: \mathcal{F}\to \mathcal{F}$. We say that $T$ \emph{preserves computability} or \emph{polynomial-time computability} 
    if $T\left(\mathrm{Comp}([0,1];\mathbb{C}^d)\right)\subseteq \mathrm{Comp}([0,1];\mathbb{C}^d)$ or $T\left(\mathrm{Pol}([0,1];\mathbb{C}^d)\right)\subseteq \mathrm{Pol}([0,1];\mathbb{C}^d)$, respectively. A computability preserving operator $T$ exhibits \emph{complexity blowup} if $T\left(\mathrm{Pol}([0,1];\mathbb{C}^d)\right)\nsubseteq \mathrm{Pol}([0,1];\mathbb{C}^d)$. 
\end{definition}
\begin{remark}
    One needs to distinguish between an operator being (polynomial-time) computable and preserving (polynomial-time) computability. Although the latter follows from the former, the converse is false \cite{FRIEDMAN198480, KAWAMURA2015Complexity}.
\end{remark}

It turns out that the complexity of functions and the occurrence of a complexity blowup are linked to the classical complexity classes on binary strings, which we review next.

\subsubsection{Complexity Classes} \label{subsec:CompClass}

The most well-known complexity classes $P$ (solvable in polynomial time) and $\mathit{NP}$ (verifiable in polynomial time) are associated with decision problems, i.e., problems expressed by partial functions mapping from $\{0,1\}^\ast$ to $\{0,1\}$. Since our analysis treats function evaluation problems, considering decision problems, i.e., deciding membership, is no longer sufficient. Instead, a different approach is generally needed to assess the complexity of function evaluation, and a suitable tool is provided by counting problems expressed by partial function mappings $\{0,1\}^\ast \rightharpoonup \{0,1\}^\ast$. 

\begin{definition}[Function Problem]\label{def:function_problem}
    A \emph{function problem} is defined by a partial function $F:\{0,1\}^\ast \rightharpoonup \{0,1\}^\ast$. We say that a Turing machine $M$ \emph{computes} $F$, if for all $x\in\{0,1\}^\ast$ we have $M(x) = F(x)$. Moreover, $F$ is in
    \begin{itemize}
        \item the class $\mathit{FP}$, if it can be computed by a (deterministic) polynomial-time Turing machine.
        \item the class $\#P$, if there exist a (deterministic) polynomial-time Turing machine $M$ and a polynomial $P\in\mathbb{N}[X]$ such that for all $x\in\{0,1\}^\ast$ we have 
        \begin{equation*}
                    \nu(F(x) )= \left| \left\{y \in\{0,1\}^{P(|x|)} \mid M(x,y) = 1 \right\}\right|,
        \end{equation*}
        where $\nu: \{0,1\}^\ast \to \mathbb{N}$ is a bijection encoding strings into natural numbers. 
    \end{itemize}
\end{definition}
In terms of Oracle Turing machines on real numbers the classes $\mathit{FP}$  and $\#P$, which (to a certain degree) can be thought of as analogs of $P$ and $\mathit{NP}$ for function evaluation problems, can be characterized in the following way: $\mathit{FP}$ (as already described in Definition \ref{def:polynomial_time_computable_function}) denotes the class of functions that can be computed by a Function-oracle Turing machine in polynomial time, whereas $\#P$ denotes the class of functions that enumerate the number of computations that end in an accepting state of polynomial-time Function-oracle Turing machines. 
Hence, a problem in $\#P$ asks for the number of polynomial-time certificates for an input; in comparison, $\mathit{NP}$ asks whether such a certificate exists. 
Note that $\mathit{FP} \subseteq \#P$ holds, but it is unknown whether $\mathit{FP} = \#P$, i.e., whether all problems in $\#P$ can be solved in polynomial time. Since $\mathit{FP} = \#P$ implies $P = \mathit{NP}$ (or equivalently, $P \neq \mathit{NP}$ implies $\mathit{FP} \neq \#P$), it is commonly assumed that this is not the case. Moreover, the importance of the class $\#P$ is demonstrated by the polynomial hierarchy and Toda's theorem \cite{Toda91, AroraBook}. The polynomial hierarchy describes a hierarchy of complexity classes based on the following iteration: 
\begin{itemize}
    \item $\mathcal{S}_0:=P$ and $\mathcal{V}_0:=P$
    \item For all $k \in \mathbb{N}$ we have $\mathcal{S}_{k+1}:=P^{\mathcal{V}_k}$ and $\mathcal{V}_{k+1}:=\mathit{NP}^{\mathcal{V}_k}$, where $P^{\mathcal{V}_k}$ and $\mathit{NP}^{\mathcal{V}_k}$ denote the class of problems in $P$ and $\mathit{NP}$ when using an oracle for $\mathcal{V}_k$ problems, respectively.
\end{itemize}
Thus, for example, we have $\mathcal{V}_1=\mathit{NP}$ and $\mathcal{S}_2=P^{\mathit{NP}}$, i.e., the class of problems that can be solved in polynomial time using an oracle for $\mathit{NP}$ problems. Crucially, Toda's theorem states that the entire polynomial hierarchy is contained in $P^{\#P}$ \cite{Toda91}. Thus, if there were a polynomial-time algorithm for any $\#P$-complete problem, it would allow us to solve any problem in the polynomial hierarchy in polynomial time, which is commonly deemed highly unlikely.

\subsection{(Systems of) Linear Ordinary Differential Equations}

Next, we formally introduce linear ODEs, focusing on their algebraic structure.

\begin{definition}[Linear Ordinary Differential Equation]\label{def:linear_differential_equation}
    A \emph{linear ordinary differential equation} of order $(m,n)\in \mathbb{N}^2$ is of the form 
    \begin{equation*}
        \sum_{i=0}^{m} a_iy^{(i)} = \sum_{k=0}^{n} b_k u^{(k)},
    \end{equation*}
    where $a_i,b_j\in\mathbb{C}$, $a_m, b_n \neq 0$ are constants, $u\in \mathcal{C}^{n+1}([0,1], \mathbb{C})$ is the inhomogeneous generator, and $y\in \mathcal{C}^{m+1}([0,1], \mathbb{C})$ is the unknown function. Since $a_m\neq 0$, we may always assume without loss of generality that $a_m = 1$. Moreover, we write $\mathrm{SOL}_{m,n}((a_i)_{i=0}^{m},(b_k)_{k=0}^{n}): u\mapsto y$ for the solution operator of the ODE.
\end{definition}

We will express our main result about the complexity blowup of linear ODEs in terms of the properties of the characteristic polynomials, which we introduce next.

\begin{definition}[Characteristic Polynomial]\label{def:characteristic_polynomial_equation}
    Consider a linear ODE of order $(m,n)$, i.e. of the form 
    \begin{equation*}
        \sum_{i=0}^{m} a_iy^{(i)} = \sum_{k=0}^{n} b_k u^{(k)}.
    \end{equation*}
    We define the \emph{characteristic polynomials} $P_y(X)$ and $P_u(X)$ of the ODE as 
    \begin{equation*}
        P_y(X) = \sum_{i=0}^{m} a_i X^i \in \mathbb{C}[X] \quad \text{and} \quad P_u(X) = \sum_{k=0}^{n} b_k X^k \in \mathbb{C}[X].
    \end{equation*}
\end{definition}

Next, we briefly extend our framework to systems of linear ODEs that essentially couple several scalar ODEs.
\begin{definition}[System of linear ODE and its characteristic polynomial]\label{def:system_of_linear_odes}
    A \emph{$d$-system of linear ODEs} of order $(m,n)\in \mathbb{N}^2$ is of the form 
    \begin{equation*}
        \sum_{i=0}^{m} A_i\cdot y^{(i)} = \sum_{j=0}^{n} B_j\cdot u^{(j)}
    \end{equation*}
    where $A_i,B_j\in\mathbb{C}^{d\times d}$ are constant, $u\in\mathcal{C}^{n+1}([0,1], \mathbb{C}^d)$ is the inhomogeneous generator, and $y\in\mathcal{C}^{m+1}([0,1], \mathbb{C}^d)$ is the unknown function. We write $P_y(X)$, $P_u(X)$ for its \emph{characteristic polynomials} given by 
    \begin{equation*}
        P_y(X) = \sum_{i=0}^{m} A_i X^i \in \mathbb{C}^{d\times d}[X], \quad P_u(X) = \sum_{j=0}^{n} B_j X^j \in \mathbb{C}^{d\times d}[X]
    \end{equation*}
    and $\mathrm{SOL}_{m,n}^d((A_i)_{i=0}^{m},(B_j)_{j=0}^{n}): u\mapsto y$ for the solution operator.
\end{definition}

Finally, we describe the general setting we are interested in studying the complexity blowup phenomenon. In particular, we are interested in cases where the inputs and coefficients of an ODE are regular/of low complexity a priori, so that a potential increase in the complexity of the output is indeed generated by the intrinsic complexity of the problem.

\begin{definition}[Polynomial-Time Computable (System of) ODEs]\label{def:comp_system_ode}
    A system of ODEs of the form \begin{equation*}
        \sum_{i=0}^{m} A_i\cdot y^{(i)} = \sum_{j=0}^{n} B_j\cdot u^{(j)}
    \end{equation*}
    with $A_i,B_j\in\mathbb{C}^{d\times d}$ and $u\in\mathcal{C}^{n+1}([0,1], \mathbb{C}^d)$ is (polynomial-time) computable, if all coefficients $A_i,B_j$, all initial conditions $y^{(i)}(0)$ with $0\leq i\leq m-1$, and the inhomogeneous generator $u$ are (polynomial-time) computable.
\end{definition}

\section{Complexity Blowup in linear ODEs}\label{sec:main}

This section presents our results, providing an exact characterization of the complexity of linear ODEs. We begin by examining first-order ODEs, building on the work in \cite{Boche2021ComplexityBlowup}. Next, we extend this to higher-order ODEs by induction. We conclude by proving an analogous result for certain first-order systems of ODEs.

Subsequently, we will always assume that our ODEs are well-behaved in the sense of Definition~\ref{def:comp_system_ode}, i.e., the parameter and the input of the ODE are polynomial-time computable. The question we ask is whether, in this well-behaved setting, one can also expect the corresponding solution to be well-behaved in the sense of low complexity, i.e., polynomial-time computability. In other words, does the solution operator preserve polynomial-time computability? Since, by construction, the potential complexity of the solution cannot arise due to external factors (high-complexity inputs or parameters), but only by the implicit properties of the ODE, we can assess whether an actual increase in complexity from input to output signal occurs. 

The proofs of the subsequent results can be found in Section~\ref{sec:proofs}.

\subsection{Linear ODEs}

Our main result shows that the answer to the above question, namely, whether the solution to a polynomial-time computable ODE is also polynomial-time computable in general, is false, assuming that $\mathit{FP}\neq \#P$. We actually obtain the stronger result that for any linear ODE of order $(m,n)$ with $P_y(X)\nmid P_u(X)$, there is a smooth and polynomial-time computable inhomogeneous generator such that the solution is $\#P$-complete. Thus, an algorithm that could solve such a polynomial-time computable linear ODE in polynomial time could be exploited to solve every problem in the polynomial hierarchy (as described in Subsection \ref{subsec:CompClass}) in polynomial time (by virtue of the $\#P$-complete instance), including, for instance, $\mathit{NP}$-complete problems.

\begin{theorem}\label{thm:main}
    If $\mathit{FP}\neq \#P$, a polynomial-time computable linear ODE of order $(m,n)$ exhibits complexity blowup if and only if $P_y(X)\nmid P_u(X)$.
\end{theorem}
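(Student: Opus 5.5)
We prove the two directions separately. Throughout we use Friedman's theorem \cite{FRIEDMAN198480}: there is a smooth $g\in\mathrm{Pol}([0,1])$ whose accumulation function $t\mapsto\int_0^t g$ is $\#P$-complete. Conversely, for every $g\in\mathrm{Pol}([0,1])$ the integral is computable in $\#P$. The latter is the upper bound we will need implicitly.

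\emph{Easy direction ($P_y\mid P_u\Rightarrow$ no blowup).} Write $P_u=P_y\cdot S$ with $S(X)=\sum_j c_jX^j$. Then $\tilde y:=\sum_j c_j u^{(j)}$ satisfies $P_y(D)\tilde y=P_u(D)u$, so the solution is $y=\tilde y+h$, where $h$ solves the homogeneous equation $P_y(D)h=0$. The initial values of $h$ are $h^{(i)}(0)=y^{(i)}(0)-\tilde y^{(i)}(0)$. Hence $h$ is a finite sum of terms $t^k e^{\lambda t}$, where the $\lambda$ are roots of $P_y$ and the coefficients are polynomial-time computable. There is a subtlety here: we must make sure that derivatives of $u$ up to order $n$ are polynomial-time computable. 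This holds for the smooth generators under consideration, or we can restrict the class appropriately; I would assume $u$ is taken in a class closed under differentiation, for instance analytic polynomial-time functions, or cite the standard fact for $\mathcal{C}^{n+1}$ polynomial-time functions. The roots of a polynomial with coefficients in $\mathbb{C}_p$ lie in $\mathbb{C}_p$, and $\exp$ is polynomial-time computable. Therefore $y\in\mathrm{Pol}$.

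\emph{Hard direction ($P_y\nmid P_u\Rightarrow$ blowup).} The plan is to induct on $m$.

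\textbf{Base case.} For $m=1$, $P_y=X-\lambda$, and variation of constants gives
\[
y(t)=e^{\lambda t}y(0)+\int_0^t e^{\lambda(t-s)}(P_u(D)u)(s)\,ds .
\]
Dividing $P_u$ by $X-\lambda$ leaves the remainder $r=P_u(\lambda)\neq0$. Hence $y$ equals a polynomial-time part plus $r\int_0^t e^{\lambda(t-s)}u(s)\,ds$. Choose $u(s)=e^{\lambda s}g(s)/r$, which is polynomial-time computable. The solution then becomes $e^{\lambda t}\int_0^t g$ plus a polynomial-time term. Multiplying by $e^{-\lambda t}$ and subtracting recovers Friedman's $\#P$-complete function, so $y\notin\mathrm{Pol}$ unless $\mathit{FP}=\#P$. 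This is essentially the argument of \cite{Boche2021ComplexityBlowup}.

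\textbf{Inductive step.} Factor $P_y=(X-\lambda)Q$ with $\deg Q=m-1$. Introduce $z:=(D-\lambda)y$, so that $Q(D)z=P_u(D)u$. There are two cases.
\begin{itemize}
\item If $Q\nmid P_u$, the induction hypothesis gives a polynomial-time $u$ for which $z$ is $\#P$-hard. Then $y$ cannot be polynomial-time, because otherwise $z=y'-\lambda y$ would be too, using that derivatives of smooth polynomial-time functions with controlled modulus are polynomial-time. This point needs care: differentiation does not preserve $\mathit{FP}$ in general. So I would instead argue in the other direction, via $y=e^{\lambda t}\big(y(0)+\int_0^t e^{-\lambda s}z\big)$, or choose a different root $\lambda$.
\item If $Q\mid P_u$, say $P_u=QS$, then $z=S(D)u+(\text{homogeneous part})$, and $y$ solves the first-order equation $(D-\lambda)y=S(D)u+h$. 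We have $(X-\lambda)\nmid S$, since otherwise $P_y\mid P_u$. The base case then applies.
\end{itemize}

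The main obstacle is avoiding differentiation in the first case. A cleaner route is to bypass the induction altogether. Since $P_y\nmid P_u$, some root $\lambda$ of $P_y$ of multiplicity $k$ is a root of $P_u$ of multiplicity less than $k$. Take a partial-fraction decomposition of $P_u/P_y$ and pick $u=e^{\lambda t}g_0$, with $g_0$ a suitable iterated-integral-adjusted Friedman function. Then $y$ contains a term $c\,e^{\lambda t}\int_0^t g$ with $c\neq0$, while every other term is polynomial-time.
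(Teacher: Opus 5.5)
Your architecture matches the paper's. The easy direction uses the particular solution $S(D)u$ plus a polynomial-time homogeneous part. The hard direction is an induction on $m$ through $z=(D-\lambda)y$, with the first-order case handled by variation of constants and the nonzero remainder $P_u(\lambda)$.

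The gap is in your Case~1 ($Q\nmid P_u$), as you finally leave it. To conclude $y\notin\mathrm{Pol}$ from $z\notin\mathrm{Pol}$ you must compute $z$ from $y$, and in this scheme that is unavoidably a differentiation step. Your three substitutes do not close it:
\begin{itemize}
\item The formula $y=e^{\lambda t}\bigl(y(0)+\int_0^t e^{-\lambda s}z(s)\,ds\bigr)$ goes the wrong way. It expresses $y$ through $z$, so it can only transfer upper bounds (it is how one shows $y\in\#P$), never hardness.
\item Switching to another root does not help in general. Whenever $\deg P_u<m-1$ (e.g.\ $P_u=1$, $m=2$), no quotient $Q=P_y/(X-\lambda)$ can divide $P_u$, so every choice of $\lambda$ lands in Case~1.
\item The partial-fraction route is only a sketch. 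The ``iterated-integral-adjusted'' $g_0$ is never specified, and the claim that all remaining terms are polynomial-time is not checked.
\end{itemize}

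The missing ingredient is the fact you stated and then withdrew. By Ko and Friedman \cite{KerIFriedman82}, if $f\in\mathcal{C}^2([0,1])$ is polynomial-time computable, then so is $f'$. The reason is that a difference quotient with step $2^{-k}$, using function values of precision $2^{-2k}$, has error $O(2^{-k})$ because $f''$ is bounded. The known failures of differentiation concern functions that are only $\mathcal{C}^1$.

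This applies here. Since $u\in\mathcal{C}^\infty$ and $a_m\neq0$, your $y$ is $\mathcal{C}^\infty$ on $[0,1]$. So $y\in\mathrm{Pol}$ would give $z=y'-\lambda y\in\mathrm{Pol}$, contradicting the induction hypothesis under $\mathit{FP}\neq\#P$. This is exactly what the paper invokes (items 3 and 7 of Proposition~\ref{prop:propertiesComp}). It is also the same theorem you already rely on when you treat $R(D)u$ and $S(D)u$ as polynomial-time.

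With this step restored your proof goes through, and your Case~2 is then a real advantage over the paper's argument. The paper always removes a root $\sigma\neq\tilde\sigma$. That is impossible when $P_y=(X-a)^m$ and $a$ has multiplicity exactly $m-1$ in $P_u$: then $\mathcal{R}_a[P_y]\mid P_u$ and the paper's induction hypothesis does not apply. Your reduction to $(D-\lambda)y=S(D)u+h$ with $(X-\lambda)\nmid S$ handles exactly that case.
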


The condition $P_y(X)\nmid P_u(X)$ describes a property of the algebraic structure of the ODE. If it is violated, then the ODE is trivial in a certain sense. Informally---a rigorous treatment is provided in Subsection \ref{subsec:4.1}---, a polynomial $P(X)\in\mathbb{C}[X]$ of the form 
\begin{equation*}
    P(X)=\sum_{i=0}^na_iX^i
\end{equation*}
can be linked to the polynomial differential operator $P(\frac{d}{dt})$ as \begin{equation*}
    P\left(\frac{d}{dt}\right):=\sum_{i=0}^n a_i\frac{d^i}{dt^i}.
\end{equation*}
Then an ODE of order $(m,n)$ can be stated succinctly as  \begin{equation*}
    P_y\left(\frac{d}{dt}\right)y(t)=P_u\left(\frac{d}{dt}\right)u(t) \quad \text{for } t\in [0,1].
\end{equation*}
Thus, assuming $P_y(X)\mid P_u(X)$, we have $P_u(X)=P_y(X)\cdot Q(X)$ for some $Q(X)\in\mathbb{C}[X]$ and the ODE can now be written as 
\begin{equation*}
    P_y\left(\frac{d}{dt}\right)y=P_y\left(\frac{d}{dt}\right)Q\left(\frac{d}{dt}\right)u.
\end{equation*}
Hence, it is clear that $y:=Q\left(\frac{d}{dt}\right)u$ is a solution to the ODE, which is polynomial-time computable. Our result shows that for all feasible inhomogeneous generators, this is actually the only possibility for the solution to be polynomial-time computable.

Next, we present more details of the proof of Theorem \ref{thm:main}. It is essentially divided into two steps. First, the complexity of the solution of linear ODEs of order $(1,n)$ is explicitly analyzed.  
\begin{proposition}\label{prop:first_order_general}
    For any (1,n)-order polynomial-time computable ODE of the form 
    \begin{equation*}
            y'+a_0y=\sum_{j=0}^{n} b_ju^{(j)}
    \end{equation*}
    the solution $y$ is in $\#P$. Furthermore, there exists $u\in\mathcal{C}^\infty([0,1], \mathbb{C}) \cap \mathrm{Pol}([0,1])$ such that $\mathrm{SOL}_{1,n}(a_0,(b_j)_{j=0}^n)(u)$ is $\#P$-complete if and only if $P_u(-a_0)\neq0$.
\end{proposition}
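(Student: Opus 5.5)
Write $\lambda:=-a_0$. The solution of $y'+a_0y=\sum_j b_ju^{(j)}$ has an explicit variation-of-constants form,
\[
y(t)=e^{\lambda t}y(0)+\int_0^t e^{\lambda(t-s)}\,(P_u(D)u)(s)\,ds .
\]
The first step is to move all derivatives off $u$ by repeated integration by parts. I will do this through polynomial division: write $P_u(X)=(X-\lambda)Q(X)+P_u(\lambda)$ with $\deg Q=n-1$. Since $(D-\lambda)$ applied to $Q(D)u$ is exactly what the kernel $e^{\lambda(t-s)}$ integrates, we get $\int_0^t e^{\lambda(t-s)}(D-\lambda)w\,ds=w(t)-e^{\lambda t}w(0)$. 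Hence
\[
y(t)=(Q(D)u)(t)+e^{\lambda t}\bigl(y(0)-(Q(D)u)(0)\bigr)+P_u(\lambda)\int_0^t e^{\lambda(t-s)}u(s)\,ds .
\]

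For the upper bound ($y\in\#P$), the terms are handled separately:
\begin{itemize}
\item $Q(D)u$ is a combination of derivatives of $u$. Derivatives of a polynomial-time computable $\mathcal{C}^\infty$ (or $\mathcal{C}^{n+1}$) function are polynomial-time computable. This is the standard result in \cite{Ko91ComplTheoryRealFunc}, and it requires $u\in\mathcal{C}^{n+1}$ as in Definition~\ref{def:linear_differential_equation}.
\item The constants $e^{\lambda t}$ and $y(0)$ are polynomial-time computable.
\item The integral $\int_0^t e^{\lambda(t-s)}u(s)\,ds$ is the integral of a polynomial-time computable integrand. By Friedman's theorem \cite{FRIEDMAN198480} this lies in $\#P$, applied to real and imaginary parts.
\end{itemize}
Sums and products of an $\mathit{FP}$ function with a $\#P$ function stay in $\#P$ in the real-function sense. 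This gives the first claim and the "only if" direction: if $P_u(\lambda)=0$, then $y$ is polynomial-time computable, so under $\mathit{FP}\neq\#P$ it cannot be $\#P$-complete.

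For the "if" direction, assume $P_u(\lambda)\ne0$. I will choose $u(s):=e^{\lambda s}g(s)$, where $g\in\mathcal{C}^\infty\cap\mathrm{Pol}([0,1])$ is Friedman's function whose accumulation $G(t)=\int_0^t g$ is $\#P$-complete. The factor $e^{\lambda s}$ is an entire function with polynomial-time computable coefficients, so $u$ is smooth and polynomial-time computable. The formula then gives
\[
y(t)=R(t)+P_u(\lambda)e^{\lambda t}G(t),
\]
where $R$ is polynomial-time computable. Conversely, $G(t)=e^{-\lambda t}(y(t)-R(t))/P_u(\lambda)$. This is a polynomial-time reduction from $G$ to $y$, because division by the fixed nonzero polynomial-time constant $P_u(\lambda)$ and multiplication by $e^{-\lambda t}$ are polynomial-time operations. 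So $y$ is $\#P$-hard, and together with membership it is $\#P$-complete.

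The main obstacle is making "$\#P$-complete real function" precise. I will follow the convention of \cite{Boche2021ComplexityBlowup,FRIEDMAN198480}: a real function is in $\#P$ if its approximations are computable in polynomial time with a $\#P$ oracle, and it is hard if the $\#P$-complete accumulation function reduces to it. With this convention I must check that the reductions above are uniform in the precision, and the delicate part is the error propagation when dividing by $P_u(\lambda)$. That step only needs a fixed lower bound $|P_u(\lambda)|\ge 2^{-k_0}$, which holds because $P_u(\lambda)\neq 0$, so it costs a constant number of extra bits.
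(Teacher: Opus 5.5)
Your proof is correct and takes essentially the paper's route: the same solution formula, the same input $u=e^{-a_0t}g$ built from Friedman's function, and the same isolation of $P_u(-a_0)$ as the coefficient of the $\#P$-complete accumulation term, with all remaining terms polynomial-time computable. Your division identity $P_u(X)=(X-\lambda)Q(X)+P_u(\lambda)$ packages two separate pieces of the paper into one formula valid for every $u$: its Leibniz-rule computation for the specific input, and its divisibility argument for the case $P_u(-a_0)=0$. You also rightly make explicit that the ``only if'' direction needs $\mathit{FP}\neq\#P$, which the paper uses only tacitly.
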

Note that the condition $P_u(-a_0)\neq 0$ is equivalent to $P_y(X)\nmid P_u(X)$ since $P_y(X)\mid P_u(X)$ holds if and only if $\mathcal{Z}(P_y)\subseteq\mathcal{Z}(P_u)$ (c.f. Lemma~\ref{lem:polynomial_divisibility_roots}) and $-a_0$ is the only root of $P_y$.

The proof strategy for linear ODEs of order (1,1), treated in \cite{Boche2021ComplexityBlowup}, also extends to linear ODEs of order (1,n) with some necessary modifications. 
The main idea is to rely on the solution formula, which we will exploit to explicitly encode a high-complexity problem into the solution of a linear ODE.
\begin{proposition}\label{prop:first_order_solution}
    The solution of a linear ODE of order $(1,n)$, i.e. of the form 
    \begin{equation*}
        y' + a_0 y = \sum_{k=0}^{n} b_k u^{(k)},
    \end{equation*}
    is given by 
    \begin{equation*}
        y(t)=e^{-a_0 t}y(0)+\sum_{j=0}^{n} \int_0^t b_j e^{-a_0(t-\tau)} u^{(j)}(\tau) d\tau.
    \end{equation*}
\end{proposition}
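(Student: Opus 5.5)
The plan is to verify directly that the proposed formula satisfies the equation and the initial condition, and then to invoke uniqueness. Alternatively, the formula can be derived with an integrating factor, which proves existence and uniqueness in one pass.

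Set $f(t):=\sum_{k=0}^{n} b_k u^{(k)}(t)$. Since $u\in\mathcal{C}^{n+1}([0,1],\mathbb{C})$, the function $f$ is continuous on $[0,1]$. The equation then reads $y'+a_0y=f$. Multiply by the integrating factor $e^{a_0 t}$. For any $\mathcal{C}^1$ solution $y$, the product rule gives
\begin{equation*}
\frac{d}{dt}\bigl(e^{a_0 t}y(t)\bigr)=e^{a_0 t}\bigl(y'(t)+a_0y(t)\bigr)=e^{a_0 t}f(t).
\end{equation*}
Integrate from $0$ to $t$ and apply the fundamental theorem of calculus, which is valid for complex-valued continuous integrands by treating real and imaginary parts separately. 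This yields $e^{a_0t}y(t)-y(0)=\int_0^t e^{a_0\tau}f(\tau)\,d\tau$. Multiply by $e^{-a_0 t}$ and pull the finite sum over $k$ out of the integral by linearity. The result is exactly
\begin{equation*}
y(t)=e^{-a_0 t}y(0)+\sum_{j=0}^{n}\int_0^t b_j e^{-a_0(t-\tau)}u^{(j)}(\tau)\,d\tau.
\end{equation*}
This shows that every solution with the given initial value has this form, so the solution is unique.

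Conversely, to confirm that this expression is indeed a solution, define $y$ by the formula. Then $y(0)$ is correct, since every integral vanishes at $t=0$. Differentiate $y(t)=e^{-a_0t}\bigl(y(0)+\int_0^t e^{a_0\tau}f(\tau)\,d\tau\bigr)$ by the product rule and the fundamental theorem of calculus. This gives $y'(t)=-a_0y(t)+f(t)$, so $y\in\mathcal{C}^1$ and the equation holds. Moreover, $f\in\mathcal{C}^1$ whenever $u\in\mathcal{C}^{n+1}$, so $y\in\mathcal{C}^2$, consistent with the regularity required in Definition~\ref{def:linear_differential_equation}.

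I expect no real obstacle. The only points needing a remark are that the fundamental theorem of calculus applies to complex-valued continuous functions, and that continuity of each $u^{(j)}$ for $j\le n$ is guaranteed by the assumed regularity of $u$. Optionally, the terms with $j\ge1$ could be integrated by parts to rewrite the solution in terms of $u$ alone, but that is not needed for the statement as given.
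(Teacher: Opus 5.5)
Your proposal is correct and follows essentially the same integrating-factor argument as the paper: multiply by $e^{a_0 t}$, recognize $\frac{d}{dt}\bigl(e^{a_0 t}y(t)\bigr)$, integrate from $0$ to $t$, and rearrange. Your converse check that the formula actually solves the equation, and your regularity remark, are small additions that the paper leaves implicit.
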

The encoding of the high-complexity problem, in fact, a $\#P$-complete problem, is built on the following well-known result about the complexity of integration \cite{FRIEDMAN198480}. 
\begin{theorem}\label{thm:integral_sharp_p_complete}
    There exists a smooth polynomial-time computable function $f:[0,1]\to\mathbb{C}$ such that the function $g$ given by 
    \begin{equation*}
        g(t) = \int_0^t f(\tau) d\tau
    \end{equation*}
    is $\#P$-complete.
\end{theorem}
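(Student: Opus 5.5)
The statement to prove is Theorem~\ref{thm:integral_sharp_p_complete}, which is Friedman's result \cite{FRIEDMAN198480}. I would prove it by encoding a $\#P$-complete counting problem, say $\#\mathrm{SAT}$, into the integral of a smooth bump-function construction. Fix a polynomial-time machine $M$ and polynomial $p$ such that $F(x)=|\{y\in\{0,1\}^{p(|x|)} : M(x,y)=1\}|$ is $\#P$-complete.

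First I partition $[0,1]$ into dyadic blocks $I_x$, one for each string $x$ (ordered by length and then lexicographically). Each block should have length about $2^{-c|x|}$, so its endpoints are dyadic rationals computable in time polynomial in $|x|$. Inside $I_x$ I further split into $2^{p(|x|)}$ equal subintervals $J_{x,y}$, one for each certificate $y$. On $J_{x,y}$ I place $\lambda_{x,y}\,h_{x,y}$, where $h_{x,y}$ is a rescaled copy of a fixed $C^\infty$ bump function $\phi$ with support in $[0,1]$ and $\int\phi=1$, and $\lambda_{x,y}=M(x,y)\in\{0,1\}$. I also multiply by a tiny amplitude $\epsilon_x=2^{-q(|x|)}$ with $q$ growing superpolynomially fast relative to the rescaling, so that all derivatives of $f$ stay bounded near the accumulation point of the blocks. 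This makes $f$ smooth on all of $[0,1]$.

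Polynomial-time computability of $f$ holds for two reasons. To evaluate $f(t)$ to precision $2^{-n}$, any block with $q(|x|)>n+O(1)$ contributes below precision, so only strings of length $O(\mathrm{poly}(n))$ matter. For those, from a dyadic approximation of $t$ one locates the candidate $x,y$ in polynomial time, runs $M(x,y)$, and evaluates $\phi$ in polynomial time. The only care needed is near subinterval boundaries, where $\phi$ and its derivatives vanish, so misidentifying the interval there costs little.

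For hardness, let $a_x$ be the right endpoint of $I_x$ and $b_x$ its left endpoint. Then
\[
g(a_x)-g(b_x)=\epsilon_x\,|J_{x,\cdot}|\,F(x),
\]
since each accepted certificate contributes exactly $\epsilon_x|J_{x,y}|$. The prefactor is a known dyadic number $2^{-r(|x|)}$ with $r$ polynomial, provided I choose $q$ polynomial but large. Computing $g$ at these two points to precision $2^{-r(|x|)-2}$ in polynomial time would recover the integer $F(x)$ by rounding, so $F\in\mathit{FP}$. Thus $g$ is $\#P$-hard. Membership of $g$ in $\#P$ is standard: approximate the integral by a Riemann sum over exponentially many dyadic points of a polynomial-time function, which is a $\#P$ counting computation after scaling.

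The main obstacle is the tension between smoothness and polynomial-size amplitudes. With polynomial $q$ the $k$-th derivatives scale like $2^{q}\cdot 2^{k(c|x|+p(|x|))}$ relative to $\epsilon_x$, and this blows up. I would try to resolve this as Friedman does. Either require only that $F(x)$ be recovered from $g$ with precision polynomial in the input length, using $\epsilon_x$ decaying like $2^{-|x|^{2}p(|x|)}$ but still of polynomial bit-length. Or, more robustly, let the amplitude on block $I_x$ be $2^{-k_x}$ with $k_x$ a polynomial chosen so that for each fixed derivative order $k$, the quantity $2^{-k_x+k\cdot \mathrm{poly}(|x|)}$ tends to $0$. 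Since any fixed $k$ is eventually dominated by a polynomial of higher degree, I would pick a sequence of polynomials of increasing degree, which keeps the amplitude exponent polynomial for each $x$ yet superpolynomial overall. If this fails, I would settle for $C^k$ for every $k$ via this diagonal choice, checking carefully that polynomial-time evaluation still survives.
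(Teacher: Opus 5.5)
The paper does not prove this statement; it imports it from Friedman \cite{FRIEDMAN198480}. Your construction has the standard ingredients of that result: bumps on dyadic blocks encoding a $\#P$ count, the endpoint-difference reduction, and the Riemann-sum argument for membership. So the route is right.

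The step that does not hold up is your ``main obstacle'' paragraph, which misdiagnoses the constraint. Write $s(n)=cn+p(n)$, so $|J_{x,y}|=2^{-s(|x|)}$ and $\sup_{I_x}|f^{(k)}|\le 2^{-q(|x|)+k\,s(|x|)}\|\phi^{(k)}\|_\infty$. There is no tension with a polynomial $q$. The blocks $I_x$ lie at distance at least $2^{-O(|x|)}$ from the accumulation point, so what you need is $q(n)-k\,s(n)-O(n)\to\infty$ for every fixed $k$, i.e.\ $q(n)/s(n)\to\infty$. Any polynomial with $\deg q>\deg s$ satisfies this, e.g.\ $q=s^2$ or your $n^2p(n)$.

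This decay is what smoothness actually requires, and mere boundedness of the derivatives, as stated in your construction paragraph, is not enough. The decay makes $f^{(k)}$ continuous at the accumulation point with value $0$, and it makes the difference quotients of $f^{(k)}$ there tend to $0$. That is what gives $f\in\mathcal{C}^\infty([0,1])$.

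Your second, diagonal option must be discarded rather than kept as a fallback, and the same goes for the literal reading of ``$q$ growing superpolynomially'' in your construction paragraph. If the amplitude exponent on length-$n$ blocks is not bounded by a single polynomial in $n$, then recovering $F(x)$ from $g(a_x)-g(b_x)=2^{-q(|x|)-s(|x|)}F(x)$ needs precision $2^{-q(|x|)-s(|x|)-2}$, which is superpolynomial in $|x|$. The reduction from $F$ to $g$ is then no longer polynomial-time, and you lose $\#P$-hardness. The stated fallback ``$\mathcal{C}^k$ for every $k$'' is just $\mathcal{C}^\infty$ again, so it is not a weaker target.

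With one fixed polynomial $q$ of degree exceeding $\deg s$, three things hold at once. First, $f$ is smooth. Second, $f$ is polynomial-time computable: the global Lipschitz constant $\sup_n 2^{-q(n)+s(n)}\|\phi'\|_\infty$ is finite, and this is what justifies your remark about misidentified subintervals. Third, $F(x)$ can be recovered in polynomial time. The rest of your argument then goes through. If your bump does not satisfy $\int\phi=1$ exactly, carry the polynomial-time computable constant $\int\phi$ through the rounding step.
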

\begin{remark}
    Under the stronger assumption that $f$ is analytic, the statement no longer holds, i.e., in this case $g$ is indeed polynomial-time computable as well. Also note that this is not a statement about polynomial-time computability of an operator on continuous functions. It is a point-wise statement in the sense that even if $\mathit{FP} = \#P$ holds true there might be no general efficient way (in the sense of polynomial-time computability) to obtain a procedure to compute the integral from a procedure to compute the function; indeed, this was shown in \cite{KAWAMURA2015Complexity}. 
\end{remark}
Putting these two observations together, we obtain the characterization of the complexity of linear ODEs of order $(1,n)$ presented in Proposition \ref{prop:first_order_general}.

The second step in the proof of Theorem \ref{thm:main} is to extend the first-order result to arbitrary linear ODEs by (polynomial-time) reduction techniques exploiting connections between the roots of the characteristic polynomials and certain polynomial reductions. 
\begin{proposition}\label{prop:general_case}
    For any $(m,n)$-order polynomial-time computable ODE of the form 
    \begin{equation*}
        \sum_{i=0}^{m} a_i y^{(i)} = \sum_{j=0}^{n} b_j u^{(j)},
    \end{equation*}
    the solution $y$ is in $\#P$. Furthermore, there exists $u\in\mathcal{C}^\infty([0,1], \mathbb{C}) \cap \mathrm{Pol}([0,1])$ such that $\mathrm{SOL}_{m,n}((a_i)_{i=0}^m,(b_j)_{j=0}^n)(x)$ is $\#P$-complete if and only if $\mathcal{Z}(P_y)\nsubseteq \mathcal{Z}(P_u)$.
\end{proposition}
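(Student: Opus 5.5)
We argue by induction on the order $m$, taking Proposition~\ref{prop:first_order_general} as the base case $m=1$. The tool is a factorization of the differential operator. Factor $P_y(X)=(X-\lambda)\tilde P_y(X)$ with $\lambda\in\mathcal{Z}(P_y)$ and $\deg \tilde P_y=m-1$. The ODE $P_y(\tfrac{d}{dt})y=P_u(\tfrac{d}{dt})u$ then splits into a cascade
\begin{equation*}
\tilde P_y\left(\tfrac{d}{dt}\right) z = P_u\left(\tfrac{d}{dt}\right) u, \qquad y'-\lambda y = z,
\end{equation*}
where the initial values $z^{(i)}(0)$ are obtained from $y^{(i)}(0)$ by finitely many arithmetic operations. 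One point needs care first: the roots of $P_y$ are algebraic over the polynomial-time computable coefficients $a_i$, and we must check that they are again polynomial-time computable. For simple roots this follows from Newton iteration applied to a rational approximation. If polynomial-time computability of the roots turns out to be delicate in general, we will reduce to roots that are polynomial-time computable, for instance by avoiding exact factorization and instead working with variation-of-constants formulas. With this in place, the whole computation stays within the class of polynomial-time computable ODEs.

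\textbf{Upper bound.} By the induction hypothesis, $z$ is in $\#P$. The second equation is first order, so Proposition~\ref{prop:first_order_solution} gives
\begin{equation*}
y(t)=e^{\lambda t}y(0)+\int_0^t e^{\lambda(t-\tau)}z(\tau)\,d\tau .
\end{equation*}
The integrand is not polynomial-time computable but only $\#P$, so we cannot simply cite Proposition~\ref{prop:first_order_general}. Instead we unroll the cascade completely. Iterating the solution formula expresses $y$ as a finite sum of exponential-polynomial initial terms plus $\int_0^t G(t-\tau)\,u(\tau)\,d\tau$ plus finitely many terms $c_j u^{(j)}(t)$. Here $G$ is the impulse response, an explicit exponential polynomial determined by the partial fractions of $P_u/P_y$. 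The derivatives of $u$ are absorbed by integration by parts. Evaluating such a convolution integral of a polynomial-time computable integrand lies in $\#P$; this is exactly the upper-bound argument already used in Proposition~\ref{prop:first_order_general}, which handles integration of $\mathrm{Pol}$ functions through a $\#P$ oracle for Riemann sums. Hence $y$ is in $\#P$.

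\textbf{Converse direction.} Suppose $\mathcal{Z}(P_y)\subseteq\mathcal{Z}(P_u)$ as multisets. By Lemma~\ref{lem:polynomial_divisibility_roots}, $P_y\mid P_u$, so $y=Q(\tfrac{d}{dt})u$ plus a homogeneous solution. The homogeneous solution is an exponential polynomial with polynomial-time computable data, and derivatives of a smooth polynomial-time computable $u$ are polynomial-time computable. Here we choose $u$ from a class where this holds, as in the paper's constructions. So $y\in\mathrm{Pol}$ and no $\#P$-complete output is possible, assuming $\mathit{FP}\neq\#P$.

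\textbf{Hardness.} Suppose $\mathcal{Z}(P_y)\nsubseteq\mathcal{Z}(P_u)$. First cancel common factors: write $P_y=D\hat P_y$ and $P_u=D\hat P_u$ with $\gcd(\hat P_y,\hat P_u)=1$ and $\deg\hat P_y\ge1$. Every solution of the original ODE differs from a solution of $\hat P_y(\tfrac{d}{dt})y=\hat P_u(\tfrac{d}{dt})u$ only by a homogeneous exponential polynomial, which is polynomial-time computable. So we may assume $P_y$ and $P_u$ are coprime. Pick a root $\lambda$ of $P_y$ with $P_u(\lambda)\neq0$. Partial fractions decompose $y$ into a sum of $Q(\tfrac{d}{dt})u$ and terms of the form $\int_0^t (t-\tau)^k e^{\mu(t-\tau)}u(\tau)\,d\tau$, one family for each root $\mu$ of $P_y$. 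The $\lambda$-part appears with nonzero coefficient $P_u(\lambda)/\hat P_y'(\lambda)$ in the simple-root case, and analogously in general. Now choose $u=e^{\lambda t}f$ with $f$ taken from Theorem~\ref{thm:integral_sharp_p_complete}. The main obstacle is showing that the contributions from the other roots $\mu\neq\lambda$ do not cancel the $\#P$-hardness of $\int_0^t f$.

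We plan to handle this by induction on the cascade rather than explicitly. Write $y$ as the solution of $y'-\lambda y=z$, where $z$ solves the reduced ODE with $\tilde P_y=P_y/(X-\lambda)$. By the induction hypothesis (the upper bound), $z$ and all the polynomial-time parts are in $\#P$. Recovering $g(t)=\int_0^t f$ from $y$ then reduces to recovering it from $z$ together with polynomially many evaluations of $y$. Alternatively, we would apply the differential operator $\tilde P_y(\tfrac{d}{dt})$ only to the hard-free factors: choose $u=R(\tfrac{d}{dt})\bigl(e^{\lambda t}h\bigr)$ with $R=\tilde P_y$, so that the ODE collapses to the first-order equation $y_1'-\lambda y_1=P_u(\tfrac{d}{dt})(e^{\lambda t}h)$. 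Proposition~\ref{prop:first_order_general} then applies directly, because $P_u(\lambda)\neq0$. Making this choice of $u$ smooth and polynomial-time computable, with $h$ a suitable polynomial-time antiderivative-free modification of Friedman's function, is the step we expect to require the most care.
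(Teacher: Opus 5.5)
Your first hardness plan does not work as written. You fix $u=e^{\lambda t}f$, peel off the very root $\lambda$ with $\hat P_u(\lambda)\neq 0$, and use only the upper-bound half of the induction hypothesis for $z$. But knowing $z\in\#P$ gives no hardness. The only available reduction goes from $z=y'-\lambda y$ to $y$, so you would need $z$ itself to be hard for this fixed $u$, and nothing establishes that: roots $\mu\neq\lambda$ of the reduced polynomial contribute integrals of $e^{(\lambda-\mu)\tau}f$, not of $f$.

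Drop that plan. Your alternative is already a complete proof, and the step you flag as delicate is immediate:
\begin{itemize}
\item After cancelling $\gcd(P_y,P_u)$, take a root $\lambda$ of $\hat P_y$ and write $\hat P_y=(X-\lambda)R$.
\item Since $\hat P_u(\lambda)\neq 0$, Proposition~\ref{prop:first_order_general} supplies $v\in\mathcal{C}^\infty([0,1],\mathbb{C})\cap\mathrm{Pol}([0,1])$ such that the solution $y_1$ of $y_1'-\lambda y_1=\hat P_u(\tfrac{d}{dt})v$ is $\#P$-complete. Concretely, $v=e^{\lambda t}\tilde f$ with Friedman's $\tilde f$ unmodified.
\item Set $u:=R(\tfrac{d}{dt})v$. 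It is smooth and polynomial-time computable by Proposition~\ref{prop:propertiesComp}.2--4.
\item $y_1$ solves the original ODE, and $y-y_1$ is a homogeneous solution with polynomial-time computable initial data, hence lies in $\mathrm{Pol}([0,1])$. Proposition~\ref{prop:propertiesComp}.5 then gives that $y$ is $\#P$-complete.
\end{itemize}
Two small points. The roots are polynomial-time computable simply because $\mathbb{C}_p$ is algebraically closed, as the paper cites, so no Newton argument or fallback is needed. In the converse you do not get to choose $u$; but every admissible $u$ has polynomial-time derivatives by Proposition~\ref{prop:propertiesComp}.3, so your argument covers all $u$.

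The paper proves hardness differently, by induction on $m$. It peels off a root $\sigma\neq\tilde\sigma$ on the output side and applies the hardness half of the induction hypothesis to $\tilde y=y'-\sigma y$, with the $u$ that hypothesis supplies. It then lifts hardness to $y$ by differentiation (Proposition~\ref{prop:propertiesComp}.7). Its upper bound runs through the same cascade and integrates a function only known to lie in $\#P$, via Proposition~\ref{prop:propertiesComp}.6.

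Your route absorbs all the other factors into the input and collapses to first order in one step. Your impulse-response upper bound integrates only polynomial-time computable functions, so it needs only the standard fact that such integrals are in $\#P$.

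The gcd cancellation also gains something real. Suppose $P_y\nmid P_u$ only because of multiplicity, e.g.\ $P_y=(X-1)^2$ and $P_u=X-1$. Peeling off either copy of $1$ leaves $X-1\mid P_u$, so the paper's induction hypothesis cannot be applied. Indeed, here $y'-y=u+ce^{t}$ is always polynomial-time computable, and hardness appears only in the final integration, which your construction captures directly.

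In return, the paper's induction avoids any gcd or partial-fraction bookkeeping and reuses the first-order result as a black box at each step.
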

Finally, Theorem \ref{thm:main} is now a direct consequence of Proposition \ref{prop:general_case} together with the next observation linking polynomial divisibility with roots (see, e.g., \cite{pinter_book_2010}).
\begin{lemma}\label{lem:polynomial_divisibility_roots}
    Let $F$ be an algebraically closed field, and let $P(X), Q(X) \in F[X]$ be polynomials. Then we have
    \begin{equation*}
        P(X)\mid Q(X)\iff \mathcal{Z}(P) \subseteq \mathcal{Z}(Q).
    \end{equation*}
\end{lemma}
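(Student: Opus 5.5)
The statement to prove is Lemma~\ref{lem:polynomial_divisibility_roots}. Here $\mathcal{Z}(\cdot)$ is a multiset, so the inclusion $\mathcal{Z}(P)\subseteq\mathcal{Z}(Q)$ means that every root of $P$ is a root of $Q$ with at least the same multiplicity. I would argue through unique factorization in $F[X]$. I treat the degenerate cases first. If $Q=0$, then $P\mid Q$ trivially, and I adopt the convention that the zero polynomial contains every multiset of roots. If $P=0$, then $P\mid Q$ forces $Q=0$. So I assume $P,Q\neq 0$.

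Since $F$ is algebraically closed, every nonzero polynomial splits into linear factors. This follows by repeatedly extracting a root $r$ with the factor theorem: $R(r)=0$ if and only if $(X-r)\mid R(X)$, which comes from division with remainder by $X-r$. Hence
\begin{equation*}
P(X)=c\prod_{r\in\mathcal{Z}(P)}(X-r),\qquad Q(X)=d\prod_{s\in\mathcal{Z}(Q)}(X-s),
\end{equation*}
with $c,d\in F\setminus\{0\}$, where each product runs over the multiset, so repeated roots appear repeatedly.

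For ($\Leftarrow$), the inclusion $\mathcal{Z}(P)\subseteq\mathcal{Z}(Q)$ lets me write $\mathcal{Z}(Q)$ as the multiset union of $\mathcal{Z}(P)$ and a complementary multiset $\mathcal{M}$. Then
\begin{equation*}
Q=P\cdot S,\qquad S(X)=\frac{d}{c}\prod_{s\in\mathcal{M}}(X-s)\in F[X].
\end{equation*}

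For ($\Rightarrow$), suppose $Q=PS$ with $S\neq0$. If $r$ is a root of $P$ of multiplicity $k$, then $(X-r)^k\mid P$, and therefore $(X-r)^k\mid Q$. So $r$ is a root of $Q$ of multiplicity at least $k$. One way to see the multiplicity claim is that multiplicities add under products, since $\mathcal{Z}(PS)=\mathcal{Z}(P)\uplus\mathcal{Z}(S)$ by concatenating the linear factorizations. A cleaner route is unique factorization in the PID $F[X]$, whose irreducibles are exactly the $X-r$. Either way, the multiplicity of $r$ in $Q$ is at least its multiplicity in $P$, which gives the inclusion.

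The only real subtlety, and so the main obstacle, is multiplicity. With plain sets the ($\Leftarrow$) direction fails: for example, $(X-1)^2\nmid (X-1)$. So the proof must interpret $\subseteq$ as inclusion of multisets, matching the paper's notation, and invoke well-definedness of multiplicities, i.e.\ uniqueness of factorization.
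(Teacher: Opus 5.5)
Your proof is correct: reading $\subseteq$ as multiset inclusion, factoring both polynomials into linear factors over the algebraically closed field and comparing multiplicities, with uniqueness of factorization ensuring the multiplicities are well defined, is the standard argument. The paper gives no proof of its own and only cites a standard algebra text, so your argument supplies essentially what that reference contains; note that your ($\Rightarrow$) direction does not use algebraic closure at all.
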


\subsection{Systems of linear ODEs}

We now extend our findings to systems of linear ODEs. This is hindered by the fact that $\mathbb{C}^{d\times d}[X]$, respectively $\mathbb{C}_p^{d\times d}[X]$, is neither algebraically closed nor commutative. 
Thus, we cannot directly adapt the techniques used for single ODEs. However, in certain simplified settings that more closely mirror the properties of single ODEs, we can still obtain similar results.

\begin{theorem}\label{thm:first_order_systems}
    If $\mathit{FP}\neq \#P$, a polynomial-time computable $d$-system of linear ODE of order $(1,n)$ 
    \begin{equation*}
        A_1 y' + A_0 y = \sum_{j=0}^{n} B_j u^{(j)}
    \end{equation*}
    with $A_1$ invertible, such that for all $j$ the commutativity condition $A_1^{-1}A_0\cdot A_1^{-1}B_j=A_1^{-1}B_j\cdot A_1^{-1}A_0$ holds, exhibits complexity blowup if and only if $ P_y(- A_1^{-1}A_0)\neq 0$.
\end{theorem}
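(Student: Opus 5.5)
We read the condition in Theorem~\ref{thm:first_order_systems} as concerning the input polynomial of the normalized system. Multiplying by $A_1^{-1}$ gives $y' + Ay = \sum_{j=0}^n C_j u^{(j)}$ with $A:=A_1^{-1}A_0$ and $C_j:=A_1^{-1}B_j$. All of these are polynomial-time computable matrices, and by assumption each $C_j$ commutes with $A$. The relevant matrix is then
\begin{equation*}
M := \sum_{j=0}^n C_j(-A)^j ,
\end{equation*}
which is $P_u(-A_1^{-1}A_0)$ after normalization. This is the exact analogue of $P_u(-a_0)$ in Proposition~\ref{prop:first_order_general}, and the plan mirrors that proof.

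\textbf{Step 1: reduce the solution formula.} The vector analogue of Proposition~\ref{prop:first_order_solution} is
\begin{equation*}
y(t)=e^{-At}y(0)+\sum_j\int_0^t e^{-A(t-s)}C_j u^{(j)}(s)\,ds .
\end{equation*}
Since $C_j$ commutes with $A$, it commutes with $e^{-A(t-s)}$. Using $\frac{d}{ds}e^{-A(t-s)}=Ae^{-A(t-s)}$, we integrate by parts $j$ times in the $j$-th term. Each step moves one derivative off $u$ and produces a factor $-A$, so
\begin{equation*}
y(t)=e^{-At}y(0)+R(t)+e^{-At}\int_0^t e^{As}\,M\,u(s)\,ds .
\end{equation*}
The boundary term $R(t)$ is a finite sum of terms $C_j A^k u^{(i)}(t)$ and $e^{-At}C_jA^k u^{(i)}(0)$. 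The matrix exponential of a polynomial-time computable matrix is polynomial-time computable in $t$ on $[0,1]$ (truncated power series). Hence $R$ and $e^{\pm At}$ are polynomial-time computable, provided the derivatives of $u$ up to order $n$ are polynomial-time computable. We assume this, as in the scalar setting; Friedman's function in Theorem~\ref{thm:integral_sharp_p_complete} has this property.

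\textbf{Step 2: upper bound and the case $M=0$.} The integrand $s\mapsto e^{As}Mu(s)$ does not depend on $t$, and each of its entries is a polynomial-time computable function. The integral of a polynomial-time computable function is in $\#P$ (Friedman's upper bound). Multiplying by polynomial-time factors and adding polynomial-time terms keeps each entry of $y$ in $\#P$. If $M=0$, the integral term vanishes and $y=e^{-At}y(0)+R$ is polynomial-time computable for every polynomial-time input, so there is no blowup.

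\textbf{Step 3: lower bound when $M\neq 0$.} Choose indices with $M_{kl}\neq 0$, let $f$ be the function from Theorem~\ref{thm:integral_sharp_p_complete} with $g(t)=\int_0^t f$, and set $y(0)=0$. Take as input
\begin{equation*}
u(s):=e^{-As}e_l f(s).
\end{equation*}
This $u$ is smooth, and it and its derivatives are polynomial-time computable, being products of polynomial-time computable factors. Since $M$ is a polynomial expression in $A$ and the $C_j$, it commutes with $A$ and hence with $e^{As}$. Therefore $e^{As}Me^{-As}e_lf(s)=Me_lf(s)$, and
\begin{equation*}
e^{At}\bigl(y(t)-R(t)\bigr)=Me_l\,g(t).
\end{equation*}
Its $k$-th component is $M_{kl}g(t)$. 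Given polynomial-time access to $y$, we can thus compute $g(t)$ in polynomial time by dividing by the nonzero polynomial-time constant $M_{kl}$; a lower bound on $|M_{kl}|$ is hard-coded. So $y$ is $\#P$-hard, and by Step 2 it is $\#P$-complete. Under $\mathit{FP}\neq\#P$ it is therefore not polynomial-time computable, which is complexity blowup.

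\textbf{Main obstacle.} The delicate point is Step 1. One must justify that the boundary terms, which involve derivatives of $u$, are polynomial-time computable. One must also track that the commutativity hypothesis is exactly what makes the integration by parts produce the single matrix $M$ and lets $e^{As}$ pass through $M$. Without commutativity the factors $(-A)^j$ would land on the wrong side of $C_j$, and the clean equivalence with $M\neq0$ would fail.
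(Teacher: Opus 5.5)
Your proof is correct. Your reading of the condition is also the right one. Literally, $P_y(-A_1^{-1}A_0)=A_1(-A_1^{-1}A_0)+A_0=0$ for every system, so the statement must mean $P_u$. Since $A_1$ is invertible, your $M=A_1^{-1}\sum_j B_j(-A_1^{-1}A_0)^j$ vanishes exactly when $P_u(-A_1^{-1}A_0)$, evaluated on the right, does.

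Your lower bound uses the same input as the paper: $e^{-At}$ times Friedman's function, with commutativity cancelling the exponential. The paper gets there differently. It expands $u^{(j)}$ by the Leibniz rule for that specific $u$ and reads off the matrix analogue of $\lambda(0)$, namely $\sum_j C_j(-A)^j$. It leaves implicit which vector to use and how to extract $g$ from the vector-valued $y$. You supply both, via $e_l$ and the $k$-th component of $e^{At}(y-R)$.

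The genuine difference is the converse. In the scalar case the paper factors $P_u=P_yQ$ and uses the particular solution $Q(\tfrac{d}{dt})u$. For matrices this would need a one-sided division of matrix polynomials, which the paper never carries out. Your identity $y=e^{-At}y(0)+R+e^{-At}\int_0^t e^{As}Mu\,ds$ holds for every admissible $u$. That one formula gives the $\#P$ upper bound, the polynomial-time case $M=0$, and the hardness case.

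One correction to your closing remark: the integration by parts itself does not need commutativity. Because $A$ always commutes with $e^{-A(t-s)}$, it produces $\sum_j(-A)^jC_j$ with no assumption on the $C_j$. So the direction ``$\sum_j(-A)^jC_j=0\Rightarrow$ no blowup'' holds even without the hypothesis. Commutativity is only used in Step 3, to get $e^{As}Me^{-As}=M$ and to identify left and right evaluation.
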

\begin{remark}
    To prove Theorem~\ref{thm:first_order_systems}, we employ the same proof strategy as in Proposition~\ref{prop:first_order_general}, i.e., use the explicit solution formula and encode a high-complexity problem. In particular, the solution formula contains the terms 
    \begin{equation*}
        e^{-A_1^{-1}A_0(t-\tau)}B_ju^{(j)}(t).
    \end{equation*}
    In one dimension, the exponential and $B_j$ commute, which allows us to eliminate the exponential in all terms simultaneously. However, in the general case, this no longer holds, which means that we cannot simplify the terms, and the proof strategy no longer applies. However, the additional assumptions in Theorem~\ref{thm:first_order_systems} (artificially) create a setting where the exponential and the coefficients commute.
\end{remark}

Under the given assumptions, the proof of Theorem \ref{thm:first_order_systems} is now completely analogous to the single case treated in Proposition \ref{prop:first_order_general} since we can rely on an explicit solution formula by a straightforward extension of Proposition \ref{prop:first_order_solution}. 
\begin{proposition}\label{prop:first_order_solution_operator_system}
    The solution of a $d$-system of linear ODEs of order $(1,n)$
    \begin{equation*}
        A_1 y' + A_0 y = \sum_{j=0}^{n} B_j u^{(j)}
    \end{equation*}
    with $A_1$ invertible is given by 
    \begin{equation*}
        y(t)=e^{-A_1^{-1}A_0 t}y(0)+\sum_{j=0}^{n} \int_0^t e^{-A_1^{-1}A_0(t-\tau)} A_1^{-1} B_j u^{(j)}(\tau) d\tau,
    \end{equation*}
    where integration is understood component-wise. 
\end{proposition}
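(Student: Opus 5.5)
The formula is the matrix version of variation of constants. The plan is to check it directly: show that the right-hand side is $\mathcal{C}^1$, satisfies the system, and has the right value at $t=0$. Uniqueness then follows from the standard theory of linear ODEs.

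\textbf{Normal form.} Since $A_1$ is invertible, multiply the system on the left by $A_1^{-1}$. Setting $M:=A_1^{-1}A_0$ and $\tilde B_j:=A_1^{-1}B_j$, the system becomes equivalent to the first-order system
\begin{equation*}
y'+My=\sum_{j=0}^n \tilde B_j u^{(j)} =: g(t).
\end{equation*}
Here $g$ is continuous on $[0,1]$ because $u\in\mathcal{C}^{n+1}$.

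\textbf{Integrating factor.} Next, recall that $E(t):=e^{Mt}=\sum_k M^k t^k/k!$ is smooth and satisfies $E'(t)=ME(t)=E(t)M$, since $M$ commutes with its own powers. It is invertible with $E(t)^{-1}=e^{-Mt}$, because $Mt$ and $-Mt$ commute. For any $\mathcal{C}^1$ solution $y$, the product rule for matrix-valued functions then gives
\begin{equation*}
\frac{d}{dt}\bigl(e^{Mt}y(t)\bigr)=e^{Mt}\bigl(y'(t)+My(t)\bigr)=e^{Mt}g(t).
\end{equation*}

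\textbf{Integration.} Integrate this identity component-wise from $0$ to $t$ using the fundamental theorem of calculus, and multiply on the left by $e^{-Mt}$. Pulling the constant matrix $e^{-Mt}$ inside the component-wise integral (which is legitimate because integration is linear) yields
\begin{equation*}
y(t)=e^{-Mt}y(0)+\int_0^t e^{-M(t-\tau)}g(\tau)\,d\tau,
\end{equation*}
where we use $e^{-Mt}e^{M\tau}=e^{-M(t-\tau)}$, valid because the two exponents commute. Finally, split $g$ into its summands by linearity of the integral. This gives exactly the stated formula, with $e^{-A_1^{-1}A_0(t-\tau)}A_1^{-1}B_j u^{(j)}(\tau)$ in the $j$-th term. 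No commutativity between $M$ and the $B_j$ is needed here, since $A_1^{-1}B_j$ simply stays to the right of the exponential.

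\textbf{Uniqueness and existence.} The derivation shows that every solution with the given initial value $y(0)$ must equal this expression, so uniqueness is automatic. For existence, differentiate the right-hand side under the integral sign (Leibniz rule: $\frac{d}{dt}\int_0^t F(t,\tau)\,d\tau=F(t,t)+\int_0^t \partial_t F(t,\tau)\,d\tau$ with $F(t,\tau)=e^{-M(t-\tau)}g(\tau)$). This gives $y'=-My+g$, so the expression is indeed a solution.

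The only point needing care is the matrix calculus: the identities $\frac{d}{dt}e^{Mt}=Me^{Mt}=e^{Mt}M$ and $e^{A}e^{B}=e^{A+B}$ for commuting $A,B$. Both follow from termwise differentiation and the Cauchy product of the absolutely convergent exponential series. This is the step where the scalar argument of Proposition~\ref{prop:first_order_solution} must be checked to hold verbatim in the matrix setting.
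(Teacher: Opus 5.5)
Your proposal is correct and follows the paper's route. The paper proves only the scalar case (Proposition~\ref{prop:first_order_solution}), by multiplying with $e^{a_0 t}$, integrating and rearranging, and declares the system version a straightforward extension. You run that argument with $M=A_1^{-1}A_0$ and make explicit the matrix-exponential identities and the existence check that the paper leaves implicit.
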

Crucially, we cannot generalize this result to higher-order systems of ODEs by the proof technique used in Proposition \ref{prop:general_case} since it again relies on the fact that the base field is algebraically closed.

Although we did not yet fully prove it, we believe that (a variant of) the general result in Theorem \ref{thm:main} also extends to higher-dimensional systems. In particular, it seems unlikely that the increased complexity induced by the coupling of multiple ODEs would result in a solution of lower complexity in general. Essentially, it suffices to obtain an analogous result to Proposition~\ref{prop:first_order_general}, i.e., an exact characterization of the occurrence of complexity blowup in first-order systems of ODEs, since higher-order systems can be reduced to first-order systems.

\section{Discussion}\label{sec:discussion}

Now, we discuss the details and general implications of our findings. Subsequently, we exemplarily apply our framework to the modeling of neuronal dynamics, which are an important tool in understanding/studying biological neural networks and to a certain degree inspired modern deep learning architectures/dynamics.  

\subsection{Interpretation and Contextualization of Results}

The results in Theorems \ref{thm:main} and \ref{thm:first_order_systems} show that complexity blowup generally occurs in a wide range of simple models, in particular, (systems of) linear ODEs. Thus, it also seems likely that similar findings extend to more complex models, so that complexity blowup may be considered a rather general phenomenon. Although this appears to be a severe limitation in simulating/solving certain problems, there are also reasons to believe that the practical impact is less significant. In fact, deriving mathematical guarantees for low-complexity preserving operators in fairly general settings---e.g., simple linear ODES with polynomial-time computable coefficients---is infeasible due to our worst-case analysis, resulting in complexity blowup. However, we wish to mention the following points.
\begin{enumerate}
    \item The results are not constructive, but only the existence of complexity blowup is shown, since the main auxiliary result in Theorem \ref{thm:integral_sharp_p_complete} (complexity blowup in integration) is also only an existence result. Therefore, strengthening the results qualitatively or quantitatively is a hard problem, since characterizing the admissible input signals that lead to complexity blowup is difficult.  
    \item In our framework, complexity blowup is measured with respect to the accuracy of a computation. More exactly, the computation time is measured relative to the desired output accuracy. Thus, even in the presence of a (theoretical) complexity blowup, it may be negligible if low accuracies are sufficient or the blowup kicks in only at high accuracies. 
    \item Finally, the theory describes the worst-case complexity, i.e., the potential for a complexity blowup based on the existence of certain polynomial-time computable input signals. Hence, obtaining mathematical guarantees for maintaining low complexity via the solution operator is in a sufficiently general setting not feasible since the complexity blowup inducing signal(s) may be admissible input(s). However, transitioning from a worst-case to a (more practicable) average-case analysis is difficult because it requires (beyond the characterization of complexity blowup inducing inputs) a suitable probability distribution over $\mathrm{Pol}([0,1];\mathbb{C}^d)$.       
\end{enumerate}
An interesting open problem arising from this discussion is whether there exist universal (to a certain degree) input signals that trigger complexity blowup simultaneously in distinct ODEs. In other words, are the complexity blowup inducing input signals specifically adapted to a specific ODE (as we construct them in our proofs) or does a common input signal exist for certain classes of ODEs?

\subsection{Case Study: Complexity Blowup in Neuronal Dynamics}

Modeling and analyzing neural dynamics (as displayed in, for instance, human brains) is a key element of neuroscience. On the one hand, the simulation of said dynamics enables their study in controlled environments. On the other hand, the power and efficiency of biological neural networks make them a viable computational framework. Indeed, unlocking the benefits of brain-like properties in computational models has not only led to current AI systems powered by deep learning and artificial neural networks (ANNs) but also potentially  offers further advantages. Going beyond the loose analogy of artificial neurons and biological neurons in ANNs, spiking neural networks (SNNs) are envisioned to be more closely aligned with the structure of computations in biological neural networks, thereby exploiting their intrinsic benefits, foremost their energy efficiency \cite{maass1997networks, Mehonic2024Neuromorphic, Rathi_neuromorphic_based_SNNs_2023, fono2025sustainableaimathematicalfoundations}.   

The adoption of spikes---asynchronous and point-like electrical pulses in the biological context---as means of communication between neurons represents the key innovation of SNNs \cite{gerstner_kistler_naud_paninski_2014, Gerstner_Kistler_2002}. The dynamic of spikes is governed (in the most basic model, still retaining biological plausibility) by a system of (linear) ODEs. Hence, a crucial question is what the computational cost of simulating the dynamics or turning the framework into a computational model on digital hardware actually is. Is it plausible to expect the advantages of the biological model in the digital computing paradigm? We approach an answer by examining the algorithmic complexity of approximating the neuronal dynamics on digital hardware via the introduced complexity analysis of linear ODEs.

\subsubsection{Leaky Integrate and Fire Dynamics} 
We consider a simple model class of neuronal dynamics still retaining biological plausibility, the \emph{Integrate-and-Fire} (IF) models \cite{Gerstner_Kistler_2002}. They generally consist of two components: (i) the time evolution of the \emph{potential} $V(t)$ of a neuron given a \emph{current} $I(t)$ and (ii) the spike generation mechanism through thresholding operations. In the \emph{Leaky IF} (LIF) model, a time-continuous dynamical system describes the evolution of the potential via a system of linear differential equations,
\begin{equation}\label{eq:LIF1}
    \begin{cases}
        \tau_m \frac{\mathrm{d} V}{\mathrm{d} t}(t) = - (V(t) - V_\text{rest}) + I(t),& \\[4pt]
        \tau_s \frac{\mathrm{d} I}{\mathrm{d} t}(t) = - I(t) + I_e(t),&        
    \end{cases}
\end{equation}
where $\tau_m, \tau_s >0$ are the \emph{membrane} and \emph{synaptic time constants}, respectively, and $I_e(t)$ denotes an \emph{external input current}. When $V$ reaches a certain value, the \emph{firing threshold} $\vartheta>0$, the neuron emits a spike, and subsequently $V$ is reset to the \emph{resting potential} $V_\text{rest}$ while the spike triggers postsynaptic currents in downstream neurons. The temporal dynamics in a layered network of spiking neurons are then typically (mathematically) abstracted by conceiving spikes as point processes localized in time expressed as a sum of Dirac delta distributions, the so-called \emph{spike train} $S(t)=\sum_s \delta(t-s)$, where $s$ iterates over the spike times of a neuron.

\subsubsection{Complexity Blowup in LIF Dynamics}

Our objective is to understand the complexity of emulating the dynamics in \eqref{eq:LIF1} on digital hardware by examining the potential occurrence of a complexity blowup in this context. Note that \eqref{eq:LIF1} constitutes a system of first-order linear ODEs with constant coefficients, driven by the input signal $I_e(t)$ and producing the output signal $(V(t),I(t))$. 
Recall that these dynamics describe the behaviour of LIF neurons only up to the point where the potential reaches the threshold. Therefore, it is necessary to simulate the dynamics/compute the threshold crossing to obtain the first spike time of a LIF neuron. The actual firing and reset mechanisms, which affect subsequent dynamics, are not even considered in this simplified formulation. Nevertheless, it turns out that complexity blowup can already occur in this reduced setting.

\begin{theorem}\label{thm:CompBlow}
    If $\mathit{FP}\neq \#P$, the LIF model in \eqref{eq:LIF1} on $[0,1]$ with polynomial-time computable parameters, i.e., coefficients and initial conditions, exhibits complexity blowup.
\end{theorem}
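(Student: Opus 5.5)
The plan is to avoid the general systems result, Theorem~\ref{thm:first_order_systems}, because its commutativity hypothesis fails for \eqref{eq:LIF1} in general. Instead, I would exploit the triangular structure of \eqref{eq:LIF1}: the second equation does not involve $V$. As a result, the component $I$ of the output solves a scalar linear ODE of order $(1,0)$, namely
\begin{equation*}
    I' + \tfrac{1}{\tau_s} I = \tfrac{1}{\tau_s} I_e .
\end{equation*}
This ODE is driven directly by the external input.

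First, I would cast \eqref{eq:LIF1} in the form of Definition~\ref{def:system_of_linear_odes}. Take $y=(V,I)$, $A_1=\mathrm{diag}(\tau_m,\tau_s)$, and let $A_0$ have rows $(1,-1)$ and $(0,1)$. The input is $u=(V_\text{rest},I_e)$ with $B_0=\mathrm{id}$, and the constant $V_\text{rest}$ is polynomial-time computable. Fix polynomial-time computable parameters $\tau_m,\tau_s>0$ and $V_\text{rest}$, and initial data, e.g.\ $V(0)=V_\text{rest}$ and $I(0)=0$. Computability preservation follows from Proposition~\ref{prop:first_order_solution_operator_system} or from \cite{collins_effective_2008}: the solution is given by integrals of computable functions against matrix exponentials with computable entries.

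Second, for the scalar equation for $I$, the characteristic polynomials are $P_y(X)=X+\tfrac{1}{\tau_s}$ and $P_u(X)=\tfrac{1}{\tau_s}$. Hence $P_u(-a_0)=\tfrac{1}{\tau_s}\neq 0$. Proposition~\ref{prop:first_order_general} then yields $I_e\in\mathcal{C}^\infty([0,1],\mathbb{C})\cap\mathrm{Pol}([0,1])$ such that $I=\mathrm{SOL}_{1,0}(\tfrac{1}{\tau_s},\tfrac{1}{\tau_s})(I_e)$ is $\#P$-complete.

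One detail needs checking for the physical interpretation: the input should be real-valued. Friedman's function in Theorem~\ref{thm:integral_sharp_p_complete} can be taken real, and the construction in the proof of Proposition~\ref{prop:first_order_general} uses $u(\tau)$ built from $e^{a_0\tau}f(\tau)$ with real $a_0$. So the resulting $I_e$ is real, or one can simply take its real part, since real and imaginary parts of a polynomial-time function are polynomial-time.

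Finally, the input $u=(V_\text{rest},I_e)$ lies in $\mathrm{Pol}([0,1];\mathbb{C}^2)$. The output $(V,I)$, however, has a $\#P$-complete component. Under $\mathit{FP}\neq\#P$, that component is not in $\mathrm{Pol}([0,1])$, because a polynomial-time algorithm for $I$ would place a $\#P$-complete problem in $\mathit{FP}$. Thus $\mathrm{SOL}^2_{1,0}$ maps $\mathrm{Pol}$ outside $\mathrm{Pol}$, which is complexity blowup in the sense of Definition~\ref{def:complexity_blowup}.

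The main obstacle I expect is the precise meaning of $\#P$-completeness of a real function and its transfer to a vector-valued output. I would verify that a polynomial-time approximation of $(V,I)$ restricts to one of $I$, which is immediate componentwise. If one additionally wants the membrane potential $V$ itself to blow up, I would use the fact that $V$ satisfies the scalar $(2,0)$ ODE $(\tau_m D+1)(\tau_s D+1)V=I_e+V_\text{rest}$. Since $1$ is not divisible by a degree-two polynomial, Proposition~\ref{prop:general_case} would give an input making $V$ $\#P$-complete as well.
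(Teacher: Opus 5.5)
Your argument is correct, but it takes a different route from the paper. The paper does invoke Theorem~\ref{thm:first_order_systems}. It drops $V_\text{rest}$ and writes $u=(0,I_e)$ and $A_1=\mathrm{id}$. Since $B_0u$ only sees the second column of $B_0$ when $u_1\equiv 0$, it chooses $B_0=\tau_s^{-1}\,\mathrm{id}$, so the commutativity hypothesis holds trivially for all $\tau_m,\tau_s$. Your claim that this hypothesis fails is therefore an artifact of your formulation $A_1=\mathrm{diag}(\tau_m,\tau_s)$, $B_0=\mathrm{id}$, where it holds only if $\tau_m=\tau_s$; it is not an intrinsic obstruction.

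Decoupling the $I$-equation buys you two things. First, you need only the scalar Proposition~\ref{prop:first_order_general} with $n=0$, not the systems theorem. That theorem's proof is only sketched, and its printed criterion $P_y(-A_1^{-1}A_0)\neq 0$ can never hold here (with $A_1=\mathrm{id}$, $P_y(-A_0)=0$), so it must be read as a condition on $P_u$. Second, and more substantively, your witness $(V_\text{rest},I_e)$ is an admissible LIF input. Theorem~\ref{thm:first_order_systems} only yields \emph{some} blowup-inducing $u\in\mathrm{Pol}([0,1];\mathbb{C}^2)$ for the auxiliary system. With $B_0=\tau_s^{-1}\mathrm{id}$, a nonzero first component would add a term $\tau_s^{-1}u_1$ to the voltage equation, which is not LIF dynamics. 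So one still has to check that the witness can be taken of the form $(0,I_e)$. The paper omits this check, and it reduces to your computation, because the second row of $e^{-A_0t}$ is $(0,e^{-t/\tau_s})$. In exchange, the paper's version presents the LIF model as a one-line instance of its general systems result.

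Two minor points. The input built in Proposition~\ref{prop:first_order_general} is $e^{-a_0t}\tilde f(t)$, not $e^{a_0\tau}f(\tau)$; this does not affect your real-valuedness remark. Your optional extension, showing via the $(2,0)$ equation and Proposition~\ref{prop:general_case} that $V$ itself can be made $\#P$-complete, is a strengthening the paper does not state.
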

\begin{proof}
    Observe that we can rewrite the LIF model from \eqref{eq:LIF1} as 
    \begin{equation*}
        A_1 y' + A_0 y = B_0 u,
    \end{equation*}
    with
    \begin{equation*}
        u = \begin{pmatrix} 0 \\ I_e \end{pmatrix}, \quad 
        y = \begin{pmatrix} u \\ I \end{pmatrix}, \quad 
        A_1 = \begin{pmatrix} 1 & 0 \\ 0 & 1 \end{pmatrix}, \quad  
        A_0 = \begin{pmatrix} \tau_m^{-1} & -\tau_m^{-1} \\ 0 & \tau_s^{-1} \end{pmatrix}, \quad 
        B_0 = \begin{pmatrix} \tau_s^{-1} & 0 \\ 0 & \tau_s^{-1}, \end{pmatrix},
\end{equation*}
where we neglected constants terms (i.e., $V_{\text{rest}}$) without loss of generality. Now, the statement immediately follows from verifying the conditions in Theorem \ref{thm:first_order_systems}.
\end{proof}
This result demonstrates that simulating the dynamics of a LIF neuron on digital hardware can be computationally expensive (or even practically infeasible) when an accurate determination of spike times is required. However, since the above theorem relies on a non-constructive existence proof for the specific input signal, we can neither meaningfully characterize its properties nor identify the subset of inputs with analogous effects on the solution operator. Additionally, for a given configuration of a LIF neuron the complexity blowup in \eqref{eq:LIF1} may only occur at a time $\tau$ after the potential $V(\tau)$ already crossed the threshold. This implies that the complexity blowup has no actual impact on the computations in the LIF model since only the dynamics up to the threshold crossing are relevant for the spike times. 

Moreover, in practice, the admissible input signals to networks of LIF neurons are typically more constrained. Specifically, the input signals are simplified to (weighted) spike trains, i.e.,  weighted Dirac delta distributions. Although this represents an unrealistic theoretical abstraction of the variety of neuro-physiological inputs in biological neurons, it nevertheless preserves essential features of their dynamics. 
Within our complexity framework, we conclude that---in this restricted setting---two opposite but equally restrictive situation may arise. First, the input signals themselves may already possess high complexity, which renders the question of output complexity less meaningful. Conversely, by constraining to a subset of polynomial-time computable input signals (which are intended to approximate the Dirac delta in a practical scenario) or if the complexity blowup occurs after the threshold crossing, the operator may in fact be polynomial-time computable on this subset or restricted domain. Deepening our understanding and providing a clearer characterization of the complexity blowup phenomenon in this setting is an interesting open problem.

\section{Proofs}\label{sec:proofs}

In this section, we present the proofs of the results stated in Section~\ref{sec:main}.

\subsection{General Observations and Preliminaries}\label{subsec:4.1}

The set $\mathbb{C}_p$ is an algebraically closed field and $\mathbb{R}_p\subset\mathbb{C}_p$ is a subfield \cite{matsui_polynomial_2006}. Based on this structure, we collect some useful properties about the complexity of functions and operations employed in the subsequent analysis.
\begin{proposition}\label{prop:propertiesComp}
    \begin{enumerate}
        \item[]
        \item Let $\mathbb{A}\in\{\mathbb{R},\mathbb{C}, \mathbb{C}^{d\times d}\}$. Then we have 
        \begin{equation*}
            \left\{ x+y\mid x\in\mathbb{A}_p, y\in\mathbb{A}_p^c  \right\} \cup\left\{ x\cdot y\mid x\in\mathbb{A}_p^*, y\in\mathbb{A}_p^c \right\} \subseteq \mathbb{A}_p^c,
        \end{equation*}
        where $\mathbb{A}_p^*$ denotes the set of elements with a (multiplicative) inverse in $\mathbb{A}_p$.
        \item The set $\mathrm{Pol}([0,1], \mathbb{C}^d)$ is a $\mathbb{C}_p$-algebra.
        \item Let $f\in\mathrm{Pol}([0,1])\cap \mathcal{C}^{n+1}([0,1],\mathbb{C})$. Then $f^{(i)}\in\mathrm{Pol}([0,1])$ for all $i =1, \dots, n$.
        \item Let $z\in\mathbb{C}_p$. Then $f: [0,1] \to \mathbb{R}$ given by $f(x)=e^{z x}$ is in $\mathrm{Pol}([0,1])$. 
        \item  Let $f:[0,1]\to\mathbb{C}$ be $\#P$-complete, let $g\in\mathrm{Pol}([0,1], \mathbb C)$ and $0\neq z\in\mathbb{C}_p$ be polynomial-time computable. Then the function 
        \begin{equation*}
            h(t):=zf(t)+g(t)
        \end{equation*}
        is also $\#P$-complete.
        \item If $f:[0,1]\to\mathbb{C}$ is in $\#P$, then so is $\int_0^tf(t)dt$.
        \item If $f':[0,1]\to\mathbb{C}$ is $\#P$-complete, then $f$ is also $\#P$-complete.
    \end{enumerate}
    \begin{proof}
        \begin{enumerate}
            \item[]
            \item Assume $x+y=:z$ is in $\mathbb{A}_p$ with $x\in\mathbb{A}_p$ and $y\in\mathbb{A}_p^c$. Then $y=z-x$ is also in $\mathbb{A}_p$, which is a contradiction. Multiplication follows analogously.
            \item Addition was shown in \cite{Boche2021ComplexityBlowup}, multiplication follows analogously, and scalar multiplication is a special case of multiplication.
            \item This follows immediately from Lemma 4.1 in \cite{KerIFriedman82}
            \item By Euler's formula $e^{a+ib}=e^a(\cos(b)+i\sin(b))$, it suffices to show that the real exponential, sine, and cosine functions are polynomial-time computable. The real exponential case was shown in \cite{Boche2021ComplexityBlowup}, and the other cases are analogous.
            \item This quickly follows from the transitivity of Turing reductions.
            \item This was shown in \cite{kawamura_computational_nodate}.
            \item We can apply the algorithm used in the proof of Theorem 5.2 in \cite{KerIFriedman82}, to calculate $f'$ from $f$ in polynomial time. Thus, $f'$ polynomial-time reduces to $f$, and $f$ is $\#P$-hard. And by the previous proof, $f$ must also be in $\#P$, and hence $\#P$-complete.
        \end{enumerate}
    \end{proof}
\end{proposition}
To characterize linear ODEs, we have already introduced their characteristic polynomials in Definition \ref{def:characteristic_polynomial_equation}. To motivate the appearance of these characteristic polynomials more formally, consider linear differential operators of order $n$ with constant coefficients 
\begin{align*}
    &D:\{f: [0,1] \to \mathbb{C}\mid f\text{ is } n \text{-times differentiable}\} \to \{f: [0,1] \to \mathbb{C}\}, \\
    &f \mapsto \sum_{k=0}^{n} a_k f^{(k)}(t), \quad a_k\in \mathbb{C}, n\in \mathbb{N}. 
\end{align*}
We write $\mathcal{P}([0,1])$ and $\mathcal{P}_n([0,1])$ for the set of all differential operators with constant coefficients on $[0,1]$ and those of order $n$, respectively. Now we can associate with each $D\in \mathcal{P}_n(I)$ a specific polynomial $P_D(X) \in \mathbb{C}[X]$:
\begin{equation*}
    D=\sum_{k=0}^{n} a_k \frac{d^k}{dt^k} \implies P_D(X) = \sum_{k=0}^{n} a_k X^k.
\end{equation*}
By abuse of notation, we will often write $D=P_D(\frac{d}{dt})$ for simplicity so that a linear ODE as introduced in Definition \ref{def:linear_differential_equation} can be expressed via its characteristic polynomials as 
\begin{equation*}
    P_y\left(\frac{d}{dt}\right) y = P_u\left(\frac{d}{dt}\right) u.
\end{equation*}
Moreover, one easily verifies the following equalities, which we will repeatedly apply in the remainder: For $P(X),Q(X)\in \mathbb{C}[X]$ we have
\begin{align*}
    (P(X)+Q(X))\left(\frac{d}{dt}\right) &= P\left(\frac{d}{dt}\right) + Q\left(\frac{d}{dt}\right) \\ (P(X)\cdot Q(X))\left(\frac{d}{dt}\right) &= P\left(\frac{d}{dt}\right) \cdot Q\left(\frac{d}{dt}\right).
\end{align*}
Another tool we need for our analysis is the reduction of the degree of polynomials. For $y\in R$ and a polynomial $P(X)\in R[X]$ over a ring $R$ with
\begin{equation*}
    P(X)=\sum_{i=0}^m a_i X^i,\quad a_i\in R,
\end{equation*}
we set 
\begin{align}\label{eq:polred}
    \mathcal{R}_z[P](X)&:=\sum_{i=0}^{m-1} b_i(z) X^i \in R[X], \quad \text{ where } \nonumber\\
    b_i(z)&:=\begin{cases}
        a_m, &\text{ if } i=m-1 \\
        a_{i+1}+z\cdot b_{i+1}(z), & \text{otherwise}
    \end{cases}.
\end{align}
In particular, $\mathcal{R}_z[P](X)$ is the polynomial obtained through the polynomial division of $P(X)$ by $(X - z)$ with remainder term $b_{-1}(y):=a_0+z\cdot b_0(z)$, i.e.,
\begin{equation*}
    P(X) = (X - z) \cdot \mathcal{R}_z[P](X) + b_{-1}(z).
\end{equation*}
This can be easily verified by direct calculation. In this setting, we immediately obtain the following characterizations.
\begin{proposition}\label{prop:polynomial_reduction}
    Let $F$ be a field, and let $P(X)\in F[X]$ be a polynomial. 
    \begin{itemize}
        \item With the notation introduced above, for $\sigma \in F$ the following are equivalent: (i) $\sigma\in\mathcal{Z}(P)$, (ii) $ b_{-1}(\sigma) = 0$, (iii) $\sigma b_0(\sigma) = -a_0$.
        \item For $\sigma\in\mathcal{Z}(P)$ we have $\mathcal{Z}\left(\mathcal{R}_\sigma[P]\right)=\mathcal{Z}(P)\setminus\{\sigma\}$.
    \end{itemize}
\end{proposition}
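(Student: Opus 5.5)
The plan is to rely entirely on the division identity stated just before the proposition,
\begin{equation*}
    P(X) = (X-\sigma)\cdot\mathcal{R}_\sigma[P](X) + b_{-1}(\sigma),
\end{equation*}
which holds over any field (indeed any commutative ring) for every $\sigma\in F$. First I will verify this identity by expanding the right-hand side and comparing coefficients with the recursion \eqref{eq:polred}. The coefficient of $X^m$ is $b_{m-1}=a_m$. For $0\le i\le m-1$, the coefficient of $X^i$ is $b_{i-1}(\sigma)-\sigma b_i(\sigma)$, and the recursion gives exactly $a_i$; for $i=0$ this is where the remainder $b_{-1}(\sigma)=a_0+\sigma b_0(\sigma)$ enters. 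This step is routine.

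For the first bullet, evaluate the identity at $X=\sigma$ to get $P(\sigma)=b_{-1}(\sigma)$. Hence $\sigma$ is a root if and only if $b_{-1}(\sigma)=0$, which gives (i)$\iff$(ii). Since $b_{-1}(\sigma)=a_0+\sigma b_0(\sigma)$ by definition, (ii)$\iff$(iii) is immediate.

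For the second bullet, take $\sigma\in\mathcal{Z}(P)$. The first bullet gives $P(X)=(X-\sigma)\,\mathcal{R}_\sigma[P](X)$. Because $F[X]$ is a unique factorization domain, the multiset of roots is multiplicative over products: $\mathcal{Z}(P)=\{\sigma\}\uplus\mathcal{Z}(\mathcal{R}_\sigma[P])$. More concretely, for each $\rho\in F$ the multiplicity of $\rho$ in $P$ equals its multiplicity in $\mathcal{R}_\sigma[P]$, plus one if $\rho=\sigma$. This follows by comparing the largest powers of $(X-\rho)$ dividing each side, using that $F[X]$ is an integral domain and that $X-\rho$ is irreducible, hence prime.

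The only point needing care is the reading of "$\setminus$" as multiset difference, removing a single copy of $\sigma$. With that reading, $\mathcal{Z}(\mathcal{R}_\sigma[P])=\mathcal{Z}(P)\setminus\{\sigma\}$ follows. I regard this interpretive point as the main (mild) obstacle. There is no real analytic difficulty, provided $F$ is a field, so that there are no zero divisors and the degree and multiplicity bookkeeping is valid. One should also note the degenerate case $m=0$: there $P$ is a nonzero constant with no roots, so the second bullet is vacuous.
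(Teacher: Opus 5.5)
Your proof is correct and follows the paper's route. The paper only remarks that the division identity $P(X)=(X-\sigma)\mathcal{R}_\sigma[P](X)+b_{-1}(\sigma)$ is verified by direct calculation and that the proposition follows immediately. You carry that out, evaluating at $\sigma$ for the first bullet and using additivity of root multiplicities for the second. You are also right to read $\setminus$ as multiset difference, since with plain set difference the second bullet fails whenever $\sigma$ is a multiple root.
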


\subsection{Proof of Complexity Blowup in Linear ODEs}

\begin{proof}[Proof of Proposition \ref{prop:first_order_general}]
    According to Theorem \ref{thm:integral_sharp_p_complete}, there exists a function $\tilde{f}\in\mathrm{Pol}([0,1])\cap \mathcal{C}^\infty([0,1],\mathbb{C})$ such that 
    \begin{equation*}
        \tilde{g}(t):=\int_0^t \tilde{f}(\tau)d\tau
    \end{equation*}
    is $\#P$-complete. Set $u(t):=e^{-a_0 t}\tilde{f}(t)$ and observe that $u\in \mathrm{Pol}([0,1])\cap \mathcal{C}^\infty([0,1],\mathbb{C})$ as well via Proposition \ref{prop:propertiesComp}.2 and \ref{prop:propertiesComp}.4. Since for $i \in \{1, \dots, n\}$ we have
    \begin{align*}
        u^{(i)}(t) &= \sum_{j=0}^{i} \binom{i}{j} (-a_0)^{i-j}e^{-a_0 t}\tilde{f}^{(j)}(t)=e^{-a_0 t}\sum_{j=0}^{n} \binom{i}{j} (-a_0)^{i-j}\tilde{f}^{(j)}(t),
    \end{align*}
    it follows that 
    \begin{align*}
        \sum_{i=0}^{n}\int_0^t b_ie^{-a_0(t-\tau)}u^{(i)}(\tau)\;d\tau&=\sum_{i=0}^{n}\int_0^t b_ie^{-a_0t}\sum_{j=0}^{n}\binom{i}{j}(-a_0)^{i-j}\tilde{f}^{(j)}(\tau)\;d\tau \\
        &=e^{-a_0t}\sum_{j=0}^{n}\left(\sum_{i=0}^{n}\binom{i}{j}(-a_0)^{i-j}b_i\right)\int_0^t \tilde{f}^{(j)}(\tau)\;d\tau \\
        &=e^{-a_0t}\sum_{j=0}^{n}\lambda(j)\int_0^t \tilde{f}^{(j)}(\tau)\;d\tau \\
        &=e^{-a_0t}\left(\sum_{j=1}^{n}\lambda(j)f^{(j-1)}(t)+\lambda(0)\tilde{g}(t)\right) \\
        &=e^{-a_0t}\sum_{j=0}^{n-1}\lambda(j+1)f^{(j)}(t)+e^{-a_0t}\lambda(0)\tilde{g}(t)
        \end{align*}
        with $\lambda:\mathbb{N}\to\mathbb{C}_p$ given by
        \begin{align*}
            \lambda(j):=\sum_{i=0}^{n}\binom{i}{j}(-a_0)^{i-j}b_i.
        \end{align*}
        Thus, by the solution formula in Proposition \ref{prop:first_order_solution} we obtain that 
        \begin{align*}
                \mathrm{SOL}_{1,n}(a_0,(b_j)_{j=0}^n)(u) &= e^{-a_0t}y(0) + \sum_{i=0}^{n}\int_0^t b_ie^{-a_0(t-\tau)}u^{(i)}(\tau)\;d\tau \\
                &= e^{-a_0t}y(0) + e^{-a_0t}\sum_{j=0}^{n-1}\lambda(j+1)f^{(j)}(t)+e^{-a_0t}\lambda(0)\tilde{g}(t).
            \end{align*}
        Note that Proposition \ref{prop:propertiesComp}.2, \ref{prop:propertiesComp}.3, and \ref{prop:propertiesComp}.4 imply that 
        \begin{equation*}
                e^{-a_0t}\lambda(0), \;e^{-a_0t}\sum_{j=0}^{n-1}\lambda(j+1)f^{(j)}(t),\; \text{ and }\;e^{-a_0t}y(0)\in\mathrm{Pol}([0,1]).
        \end{equation*}
        Hence, $\mathrm{SOL}_{1,n}(a_0,(b_j)_{j=0}^n)(u)$ is $\#P$-complete if 
        \begin{equation*}
            0 \neq \lambda(0) = \sum_{i=0}^{n} \binom{i}{0}(-a_0)^{i-0}b_i = \sum_{i=0}^{n} b_i(-a_0)^i = P_u(-a_0)
        \end{equation*}
        follows from Proposition \ref{prop:propertiesComp}.5.
        Thus, it remains to show that if $P_u(-a_0)=0$, then the solution is always polynomial-time computable. Observe that by definition, $P_y(X)$ is a polynomial of degree one with a single root $-a_0$. Therefore, assuming $P_u(-a_0)=0$ implies that $\{-a_0\}=\mathcal{Z}(P_y) \subset \mathcal{Z}(P_u)$, which by Lemma \ref{lem:polynomial_divisibility_roots} is equivalent to $P_y(X)\mid P_u(X)$. Thus, there exists $Q(X)\in\mathbb{C}[X]$ with $P_u(X) = P_y(X)\cdot Q(X)$ so that the ODE can be rewritten as
        \begin{equation}\label{eq:eq1}
            P_y\left(\frac{d}{dt}\right)y = P_u\left(\frac{d}{dt}\right) u= (P_y(X)\cdot Q(X))\left(\frac{d}{dt}\right)u = P_y\left(\frac{d}{dt}\right)\left(Q\left(\frac{d}{dt}\right)u\right).
        \end{equation}
        Comparing the lefthand and righthand side shows that $y_p:=Q\left(\frac{d}{dt}\right)u$ is a particular solution of the ODE, which is polynomial-time computable by the structural properties of $\mathrm{Pol}([0,1])$ given in Proposition \ref{prop:propertiesComp}.2 and \ref{prop:propertiesComp}.3 for $u\in \mathrm{Pol}([0,1])\cap \mathcal{C}^{n+1}([0,1],\mathbb{C})$. Since the general solution of the ODE can be decomposed into the homogeneous solution $y_h$ and a particular solution $y_p$ we get
        \begin{equation}\label{eq:eq2}
            \mathrm{SOL}_{1,n}(a_0,(b_j)_{j=0}^n)(u) = y_h + y_p.
        \end{equation}
        Note that by Proposition \ref{prop:first_order_solution} the homogeneous solution is given by $y_h(t)=e^{-a_0 t}y(0)$, i.e., $y_h$ is polynomial-time computable via Proposition \ref{prop:propertiesComp}.2 and \ref{prop:propertiesComp}.4, and thereby $y_h + y_p\in \mathrm{Pol}([0,1])$.

        Finally, note that by applying Proposition~\ref{prop:propertiesComp}.6 to the solution formula from Proposition~\ref{prop:first_order_solution} it follows immediately that $y$ is at most in $\#P$, independent of $P_y$.
\end{proof}

\begin{proof}[Proof of Proposition \ref{prop:first_order_solution}]
    We consider the first-order ODE \begin{align*}
        y' + a_0 y = \sum_{j=0}^{n} b_j u^{(j)}
    \end{align*}
    where $a_0, b_j\in\mathbb{C}_p$ and $u\in\mathrm{Pol}([0,1])$ is in $\mathcal{C}^{n+1}([0,1], \mathbb{C})$. By multiplying with $e^{a_0 t}$, we get \begin{align*}
        e^{a_0 t} y'(t) + a_0 e^{a_0 t} y(t) = \frac{d}{dt}(e^{a_0 t} y(t)) = \sum_{j=0}^{n} b_j e^{a_0 t} u^{(j)}(t).
    \end{align*}
    Taking the integral, we get: \begin{align*}
        e^{a_0 t} y(t) - y(0) = \sum_{j=0}^{n} \int_{0}^{t} b_j e^{a_0\tau} u^{(j)}(\tau) d\tau.
    \end{align*}
    By rearranging, we obtain the desired formula.
\end{proof}

\begin{proof}[Proof of Proposition \ref{prop:general_case}]
    First, observe that for $u\in \mathrm{Pol}([0,1])\cap \mathcal{C}^n([0,1], \mathbb{C})$, if $P_y(X)\mid P_u(X)$, then the corresponding output of the ODE is polynomial-time computable by the same argument as in the proof of Proposition \ref{prop:first_order_general}, in particular \eqref{eq:eq1} and \eqref{eq:eq2}. Hence, by Lemma \ref{lem:polynomial_divisibility_roots} we know that $\mathcal{Z}(P_y) \subset \mathcal{Z}(P_u)$ implies that for $u\in \mathrm{Pol}([0,1])\cap \mathcal{C}^n([0,1],\mathbb{C})$ the corresponding solution $y = \mathrm{SOL}_{m,n}((a_i)_{i=0}^m,(b_j)_{j=0}^n)(u) \in \mathrm{Pol}([0,1])$ is polynomial-time computable. 

    Thus, assume that $\mathcal{Z}(P_y) \nsubseteq \mathcal{Z}(P_u)$, i.e., there exists $\tilde{\sigma}\in\mathcal{Z}(P_y) \setminus \mathcal{Z}(P_u)$. We show that there exists $u\in\mathcal{C}^\infty([0,1],\mathbb{C}) \cap \mathrm{Pol}([0,1])$ such that $\mathrm{SOL}_{m,n}((a_i)_{i=0}^m,(b_j)_{j=0}^n)(u)$ is $\#P$-complete by induction on $m$. The base-case $(1,n)$ was proven in Proposition \ref{prop:first_order_general}. Therefore, assume that the claim holds for all polynomial-time computable ODEs of order $(k,n)$ with $k<m$ and consider a polynomial-time computable ODE of order $(m,n)$ 
    \begin{equation*}
        \sum_{i=0}^{m}a_iy^{(i)}=\sum_{j=0}^{n}c_j u^{(j)},\quad a_i,c_i\in\mathbb{C}_{p}.
    \end{equation*}
    We now make the following observation: If there exist $b_1, \dots,b_{m-1},\sigma\in\mathbb{C}_{p}$ such that 
    \begin{equation}\label{eq:decomposition_condition}
        \sum_{i=0}^{m}a_iy^{(i)}=\sum_{i=0}^{m-1}b_i(y'-\sigma y)^{(i)},
    \end{equation}
    then $\tilde{y}:=y'-\sigma y$ is a solution of the polynomial-time computable ODE 
    \begin{equation}\label{eq:eq3}
        \sum_{i=0}^{m-1}b_i\tilde{y}^{(i)}=\sum_{j=0}^{n}c_j u^{(j)}.
    \end{equation}
    Hence, by the induction hypothesis we find $u\in\mathrm{Pol}([0,1]) \cap \mathcal{C}^\infty([0,1],\mathbb{C})$ such that $\tilde{y} = \mathrm{SOL}_{m-1,n}((b_i)_{i=0}^{m-1},(c_j)_{j=0}^n)(u)$ is $\#P$-complete provided that $\mathcal{Z}(\tilde{P}_y) \nsubseteq \mathcal{Z}(P_u)$ with $\tilde{P}_y(X):=\sum_{i=0}^{m-1}b_iX^i$ being the characteristic polynomial of the lefthand side corresponding to the modified ODE \eqref{eq:eq3}.
    Then, by Proposition~\ref{prop:propertiesComp}.7, $y$ is also $\#P$-complete. Note that, independently of the algebraic structure, we can also apply this construction to immediately show that the solution of the ODE is at most $\#P$.

    It is left to prove the existence of $b_1, \dots,b_{m-1},\sigma$ with the given properties and the fact that $\mathcal{Z}(\tilde{P}_y) \nsubseteq \mathcal{Z}(P_u)$ (or equivalently $\tilde{P}_y(X) \nmid P_u(X)$) holds. Unfolding the condition in \eqref{eq:decomposition_condition} gives
    \begin{equation*}
        a_my^{(m)}+\sum_{i=1}^{m-1}a_iy^{(i)}+a_0y = b_{m-1}y^{(m)} + \sum_{i=1}^{m-1}(b_{i-1}-\sigma b_i)y^{(i)}-\sigma b_0 y.
    \end{equation*}
    Hence, comparing the coefficients yields  
    \begin{equation*}
        a_m=b_{m-1},\quad b_{i-1}=a_i+\sigma b_i \text{ for } i=1,\dots,m-1, \quad \text{and} \quad \sigma b_0=-a_0.
    \end{equation*}
    Recalling the definition of polynomial reduction in \eqref{eq:polred} and its properties in Proposition \ref{prop:polynomial_reduction}, we notice that the condition is satisfied for $\tilde{P}_y(X)=\mathcal{R}_\sigma[P_y](X)$ where $\sigma\in\mathcal{Z}(P_y)$ and additionally $b_i, \sigma \in \mathbb{C}_p$ since $\in \mathbb{C}_p$ is an algebraically closed field (note that $\sigma$ is in $\mathbb{C}_p$, since it is a root of a polynomial in $\mathbb{C}_p[X]$). Since $|\mathcal{Z}(P_y)|=\text{deg}(P_y)=m\geq 2$ we can in particular choose $\sigma \neq \tilde{\sigma}$ so that again by Proposition \ref{prop:polynomial_reduction} 
    \begin{equation*}
        \mathcal{Z}(\tilde{P}_y)= \mathcal{Z}(\mathcal{R}_\sigma[P_y]) = \mathcal{Z}(P_y) \setminus \{\sigma\},
    \end{equation*}
    i.e., $\tilde{\sigma} \in \mathcal{Z}(\tilde{P}_y) \setminus \mathcal{Z}(P_u)$ or in other words $ \mathcal{Z}(\tilde{P}_y) \nsubseteq \mathcal{Z}(P_u)$. 
\end{proof}

\begin{proof}[Proof of Theorem \ref{thm:main}]
    Proposition \ref{prop:general_case} (together with Lemma \ref{lem:polynomial_divisibility_roots}) characterizes the complexity of the solution of a polynomial-time computable linear ODE. In particular, there exists $u\in\mathrm{Pol}([0,1]) \cap \mathcal{C}^\infty([0,1],\mathbb{C})$ such that the corresponding solution is $\#\text{P}$-complete if and only if the characteristic polynomials of the ODE satisfy $P_y(X)\nmid P_u(X)$. Thus, assuming $\mathit{FP}\neq \#P$, the solution is not polynomial-time computable, i.e., the ODE exhibits complexity blowup.    
\end{proof}

\subsection{Proof of Complexity Blowup in Systems of linear ODEs}

We can immediately extend our proof technique from Proposition \ref{prop:first_order_general} to systems of linear ODEs, assuming that the coefficients (now in the form of matrices) of the ODE in question commute. 
\begin{proof}[Proof of Theorem \ref{thm:first_order_systems}]
    By the commutativity conditions, we can rewrite the solution formula from Proposition~\ref{prop:first_order_solution_operator_system} as 
    \begin{equation*}
        y(t)=e^{-A_1^{-1}A_0t}y(0)+\sum_{j=0}^n\int_0^tA_1^{-1}B_je^{-A_1^{-1}A_0(t-\tau)}u^{(j)}(\tau)d\tau.
    \end{equation*}
    We can then apply an analogous construction as in the proof of Proposition~\ref{prop:first_order_general}, using the matrix exponential instead of the scalar exponential. Note that the matrix exponential (considered as a function) with a polynomial-time computable coefficient matrix---which is the case under the given assumptions---is a polynomial-time computable function. This follows from a direct extension of the scaler case in Proposition \ref{prop:propertiesComp}.4.
\end{proof}

\section*{Acknowledgments}
A. Fono and N. Wedlich were supported by the project ``Next Generation AI Computing (gAIn)'', funded by the Bavarian Ministry of Science and the Arts and the Saxon Ministry for Science, Culture, and Tourism.

This work of H. Boche was supported in part by the Federal Ministry for Research, Technology and Space of Germany (BMFTR) in the programme of ”Souverän.Digital.Vernetzt”, joint project 6G-life, project identification number 16KISK002. H. Boche was also partially supported by the project ``Next Generation AI Computing (gAIn)'', funded by the Bavarian Ministry of Science and the Arts and the Saxon Ministry for Science, Culture, and Tourism.

G. Kutyniok was supported in part by the Munich Center for Machine Learning (BMFTR) as well as the German Research Foundation under Grants DFG-SPP-2298, KU 1446/31-1 and KU 1446/32-1. She also acknowledges support by the Konrad Zuse School of Excellence in Reliable AI (DAAD) and the project "Next Generation AI Computing (gAIn)", which is funded by the Bavarian Ministry of Science and the Arts and the Saxon Ministry for Science, Culture and Tourism.

\printbibliography
\end{document}